\def\2opt{2-Opt heuristic}
\title{The Approximation Ratio of the 2-Opt~Heuristic for the Euclidean Traveling Salesman Problem} 
\titlerunning{The Approximation Ratio of the 2-Opt~Heuristic for the Euclidean TSP} 
\author{Ulrich A.\ Brodowsky}{Pontsheide 20, 52076 Aachen, Germany}{}{}{}
\author{Stefan Hougardy\footnote{corresponding author}}{Research Institute for Discrete Mathematics, University of Bonn, Lenn\'estr.~2, 53113 Bonn, Germany}{hougardy@or.uni-bonn.de}{https://orcid.org/0000-0001-8656-3418}{funded by the Deutsche Forschungsgemeinschaft (DFG, German Research Foundation) under Germany's Excellence Strategy --- EXC-2047/1 --- 390685813}
\authorrunning{U.\,A. Brodowsky and S. Hougardy} 
\keywords{traveling salesman problem, metric TSP, Euclidean TSP, 2-Opt, approximation algorithm} 
\begin{document}

\maketitle

\begin{abstract}
The \2opt is a simple improvement heuristic for the Traveling Salesman Problem.
It starts with an arbitrary tour and then repeatedly replaces two edges of the tour  
by two other edges, as long as this yields a shorter tour.
We will prove that for Euclidean Traveling Salesman Problems with $n$ cities 
the approximation ratio of the \2opt is $\Theta(\log n / \log \log n)$. 
This improves the upper bound of $O(\log n)$ given by Chandra, Karloff, and Tovey~\cite{CKT1999} in 1999.
\end{abstract}

\newpage

\section{Introduction}

The Traveling Salesman Problem (TSP) is one of the best studied problems in combinatorial optimization.
Given $n$ cities and their pairwise distances, the task is to find a shortest tour that visits each city exactly once.
This problem is NP-hard~\cite{GJ1979} and it is even hard to approximate to a factor that is polynomial in $n$~\cite{SG1976}.

In the \emph{Euclidean TSP}, the cities are points in $\mathbb{R}^2$ and the distance function is the Euclidean distance between 
the points. The Euclidean TSP is also NP-hard~\cite{Pap1977} but it allows a polynomial time approximation scheme~\cite{Aro1998,Mit1999}. 
Euclidean Traveling Salesman Problems often appear in practice and they are usually solved using some heuristics. 
One of the simplest of these heuristics is the \emph{\2opt}. It starts with an arbitrary tour and then repeatedly replaces two edges of the tour  
by two other edges, as long as this yields a shorter tour. The \2opt stops when no further improvement can be made this way. 
A tour that the \2opt cannot improve is called \emph{2-optimal}.

On real-world instances the \2opt achieves surprisingly good results (see e.g.\ Bentley~\cite{Ben1992}). 
Despite its simplicity the exact approximation ratio of the \2opt for Euclidean TSP is not known. 
In 1999, Chandra, Karloff, and Tovey~\cite{CKT1999} proved a lower bound of $c\cdot \frac{\log n}{\log \log n}$ 
for some constant $c > 0$ and an upper bound of $O(\log n)$ on the approximation ratio of the \2opt\ for 
Euclidean TSP. This leaves a gap of factor $O(\log \log n)$ between the best known upper and lower bound for
the approximation ratio of the \2opt\ for Euclidean TSP. 
Our main result closes this gap up to a constant factor:

\begin{theorem}
The approximation ratio of the \2opt\ for Euclidean TSP instances with $n$ points is $\Theta(\log n / \log \log n)$.
\label{thm:main}
\end{theorem}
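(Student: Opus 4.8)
The claimed bound has a lower and an upper half. The lower half, $\Omega(\log n/\log\log n)$, is exactly the recursive point configuration of Chandra, Karloff, and Tovey~\cite{CKT1999}, so the real task is the matching upper bound: for every $n$-point set $P$ and every 2-optimal tour $T$ of $P$ one must show $|T| = O(\log n/\log\log n)\cdot|T^*|$, where $T^*$ is an optimal tour. The plan is to fix such $T$ and $T^*$, rescale so that $|T^*| = 1$ (then any two points of $P$ are at distance at most $1/2$, so $P$ lies in a disk $D_0$ of radius $1/2$), and bound $|T|$ from above. Since a 2-optimal tour of $n$ points in a disk of radius $1/2$ can nevertheless have length $\Theta(\sqrt n)$, the argument must exploit more than the diameter bound --- namely, that $P$ is clustered enough to admit a tour of length $1$. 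The cleanest encoding of this is the chain $|T| \le \sum_{e=\{u,v\}\in T} d_{T^*}(u,v) = \sum_{f\in T^*}|f|\cdot\mathrm{cong}(f)$, where $d_{T^*}(u,v)$ is the length of the shorter of the two $T^*$-arcs between $u$ and $v$ --- which is at least $|uv|$ by the triangle inequality --- and $\mathrm{cong}(f)$ counts the edges of $T$ whose shorter $T^*$-arc contains $f$. Thus it would suffice to prove that 2-optimality forces $\mathrm{cong}(f) = O(\log n/\log\log n)$ for every $f\in T^*$, or, more robustly, that the $|f|$-weighted average of $\mathrm{cong}$ is this small.

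The engine of the proof would be a purely geometric \emph{exclusion lemma} for 2-optimal Euclidean tours, read off from the reconnection inequality $|ab|+|cd|\le|ac|+|bd|$, valid for any two edges $\{a,b\},\{c,d\}$ of $T$ taken in the cyclic order in which they occur on $T$. It already implies that $T$ is non-self-intersecting, and, sharpened, that two long edges of $T$ passing near a common point are almost parallel and --- up to the orientation the cycle imposes on them --- almost oppositely directed; iterating this bounds, for every $r>0$, the number of edges of $T$ of length at least $r$ meeting a fixed disk of radius $r$ by an absolute constant. With this I would lay down a dyadic family of grids of mesh $b^{-1},b^{-2},b^{-3},\dots$ on $D_0$ for a base $b$ to be fixed later and estimate $|T|$ scale by scale. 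At scale $j$ the exclusion lemma caps the number of edges of length about $b^{-j}$ that meet any single grid cell by a constant, so the scale-$j$ contribution to $|T|$ is $O(b^{-j})$ times the number of cells visited by those edges, and the crux of the whole estimate is to bound that number of cells by $O(b^{j})$ --- using that $T^*$ has length $1$ and a second appeal to the reconnection inequality, which must prevent the scale-$j$ edges of $T$ from straying far from $T^*$. Granting this, every scale contributes $O(1)$ to $|T|$; there are $O(\log_b n)$ scales before the lengths drop below $1/n$, and all edges shorter than $1/n$ together contribute $O(1)$, so $|T| = O(\log_b n)$. Choosing $b = \Theta(\log n)$ --- affordable precisely because the constant in the exclusion lemma does not depend on $b$ --- turns this into $O(\log n/\log\log n)$; in this light the $O(\log n)$ bound of~\cite{CKT1999} is what the safe choice of a constant base delivers.

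I expect the exclusion lemma and the cell-counting attached to it to be the main obstacle. The reconnection inequality yields a genuine 2-opt improvement only for the one reconnection compatible with the cyclic order on $T$, so excluding a cluster of long, nearly parallel edges requires a careful analysis of how the cycle threads through the cluster --- this is where non-self-intersection of $T$ and an alternation (essentially parity) argument on the parallel segments must be combined. Equally delicate is the companion claim that the edges of $T$ at a given scale cannot be much more spread out than $T^*$: an edge of $T$ can cut straight across a region that $T^*$ never enters, so one cannot merely charge cells of $T$ to neighbouring cells of $T^*$ but must use 2-optimality to rule out this kind of shortcutting. Everything else --- the rescaling, the dyadic bookkeeping, the geometric-series tail, and the reduction to the congestion quantity $\mathrm{cong}$ --- is routine by comparison.
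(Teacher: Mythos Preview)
Your plan is the Chandra--Karloff--Tovey packing argument with the base $b$ made variable, and the entire improvement over $O(\log n)$ rests on the claim that the scale-$j$ contribution to $|T|$ is $O(1)$ independent of $b$. You correctly flag the cell-counting as the crux, but the tools you list cannot close it. A scale-$j$ edge has length up to $b^{-j+1}$, so even though its endpoints lie on $T^*$ its midpoint can sit at distance $b^{-j+1}/2$ from $T^*$; the scale-$j$ edges therefore live in a tube of width $\Theta(b\cdot b^{-j})$ around $T^*$, which is $\Theta(b^{j+1})$ cells of mesh $b^{-j}$, not $\Theta(b^{j})$. Equivalently: the exclusion lemma gives $O(1)$ long edges per cell that $T^*$ touches, hence $O(b^{j})$ scale-$j$ edges in all, but each may have length up to $b^{-j+1}$, so the scale-$j$ total is $O(b)$, not $O(1)$. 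Over $\log_b n$ scales this is $O(b\log n/\log b)$, minimised at constant $b$; you recover $O(\log n)$ and no better, and taking $b=\log n$ only hurts. Refining scale $j$ into $\log_2 b$ dyadic sub-scales just reproduces the base-$2$ count. Your hoped-for ``second appeal to the reconnection inequality'' would have to show that scale-$j$ edges stay within $O(b^{-j})$ of $T^*$, but 2-optimality of $T$ says nothing about how $T^*$ winds between two nearby vertices, so a packing bound on $T$ alone cannot see this.

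The paper's route is structurally different and supplies exactly the missing link between $T$, $T^*$, and the scale parameter. After inserting the $O(n^2)$ crossing points so that $T$ and $T^*$ form a plane graph, the $T$-edges interior to the polygon $T^*$ are chords; splitting them by orientation into two classes, each class together with $T^*$ has a planar dual (minus the outer face) that is a tree, which one roots to obtain an arborescence $A$. Each arc $e=(x,y)$ of $A$ carries $c(e)=$ length of the dual $T$-edge and $w(e)=$ the $T^*$-perimeter of the child region. The triangle inequality and the 2-opt inequality become local conditions on $A$, the decisive one being $c(x,y)+c(y,z)\le w(x,y)+\sum_{g\in\delta^+(y)\setminus\{(y,z)\}}c(g)$ for every child $(y,z)$. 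Setting $k=c(A)/w(A)$ and decomposing by ratio $k/4$, this inequality disposes of the edges whose heaviest child exceeds $c(e)/k$ (their total $c$-weight is at most $\tfrac{k}{2}w(A)$), and for the remaining edges --- all of whose children are lighter by a factor $k$ --- an induction down the arborescence shows each scale contributes at most $2\,w(A)$, genuinely independent of $k$, because the sub-arborescence below such an edge carries enough disjoint $w$-weight to pay for it. It is this tree-based charging, unavailable in a pure packing argument, that produces the $\log\log n$ saving.
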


\subparagraph*{Related Results.} 
On real-world Euclidean TSP instances it has been observed that the  2-Opt heuristic 
needs a sub-quadratic number of iterations until it reaches a local optimum~\cite{Ben1992}. However, 
there exist worst-case Euclidean TSP instances for which the 2-Opt heuristic may need an exponential number of 
iterations~\cite{ERV2014}.

For $n$ points embedded into the $d$-dimensional Euclidean space $\mathbb{R}^d$ for some constant $d>2$ the approximation
ratio of the 2-Opt heuristic is bounded by $O(\log n)$ from above~\cite{CKT1999} and by  $\Omega(\log n / \log \log n)$
from below~\cite{Zho2020}.  

The Euclidean TSP is a special case of the \emph{metric TSP}, i.e., the Traveling Salesman Problem
where the distance function satisfies the triangle inequality. 
The well-known algorithm of Christofides~\cite{Chr1976} and Serdjukov~\cite{Ser1978} achieves an approximation ratio of $3/2$
for the metric TSP. If one allows randomization then the recent algorithm of Karlin, Klein, and Oveis Gharan~\cite{KKO2020} 
slightly improves on this. For the metric TSP the 2-Opt heuristic 
has approximation ratio exactly $\sqrt{n/2}$~\cite{HZZ2020}. 
A very special case of the metric TSP is the 1-2-TSP. In this version all edge lengths
have to be~1 or~2. For the 1-2-TSP the approximation ratio of the 2-Opt heuristic is $3/2$~\cite{KMSV1998}.

For a constant $k > 2$ the 2-Opt heuristic naturally extends to the so called \emph{$k$-Opt heuristic} where 
in each iteration $k$ edges of a 
tour are replaced by $k$ other edges. For Euclidean TSP Zhong~\cite{Zho2020} has shown that $\Omega(\log n/ \log\log n)$ is
a lower bound for the $k$-Opt heuristic if $k$ is constant. Therefore, Theorem~\ref{thm:main} immediately implies:

\begin{corollary} For constant $k$ the 
approximation ratio of the $k$-Opt heuristic for Euclidean TSP instances with $n$ points is $\Theta(\log n / \log \log n)$.
\end{corollary}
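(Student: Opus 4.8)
The plan is to derive the corollary directly from \cref{thm:main} together with the known $k$-Opt lower bound, since for a constant $k\ge 2$ these two results already pin the approximation ratio down to $\Theta(\log n/\log\log n)$. For the lower bound I would simply cite Zhong~\cite{Zho2020}, which for every constant $k$ exhibits Euclidean instances on $n$ points carrying a $k$-optimal tour of length $\Omega(\log n/\log\log n)$ times the optimum; this gives that the approximation ratio of the $k$-Opt heuristic on Euclidean instances with $n$ points is $\Omega(\log n/\log\log n)$.

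For the upper bound the key step is the observation that, for $k\ge 2$, every $k$-optimal tour is also $2$-optimal. To see this I would take any improving $2$-Opt move on a tour $T$: it deletes two edges $e_1,e_2$ of $T$ and inserts two edges $e_1',e_2'$ so that the result is again a tour of strictly smaller length. Picking any $k-2$ further edges $f_1,\dots,f_{k-2}$ of $T$ and then deleting $e_1,e_2,f_1,\dots,f_{k-2}$ while inserting $e_1',e_2',f_1,\dots,f_{k-2}$ is a $k$-Opt move producing the very same improved tour, so $T$ cannot be $k$-optimal. Hence every $k$-optimal tour admits no improving $2$-Opt move, and therefore its ratio to an optimal tour is at most the worst possible such ratio over all $2$-optimal tours, which by \cref{thm:main} is $O(\log n/\log\log n)$. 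Combining this with the lower bound yields the claimed $\Theta(\log n/\log\log n)$.

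The only thing that needs a moment of care --- and hence the closest thing to an obstacle here --- is making sure that Zhong's lower bound and the reduction above refer to the same notion of a $k$-Opt move, namely ``delete $k$ tour edges and insert $k$ edges with the result a tour''. Under that standard definition the argument goes through verbatim, so no genuinely new work beyond \cref{thm:main} is needed.
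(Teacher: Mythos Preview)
Your proposal is correct and matches the paper's own argument: the paper derives the corollary as an immediate consequence of Theorem~\ref{thm:main} together with Zhong's $\Omega(\log n/\log\log n)$ lower bound for $k$-Opt, the implicit step being exactly your observation that every $k$-optimal tour is $2$-optimal. You spell out this step more carefully than the paper does, but the approach is the same.
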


\subparagraph*{Organization of the paper.}
The result of Chandra, Karloff, and Tovey~\cite[Theorem 4.4]{CKT1999} mentioned above shows that  
$\Omega(\log n / \log \log n)$ is a lower bound for the 2-Opt heuristic for Euclidean TSP. 
To prove Theorem~\ref{thm:main} it remains to prove the upper bound $O(\log n / \log \log n)$. 
We proceed as follows. First we will present in Section~\ref{sec:notation} some properties of Euclidean 2-optimal tours. 
In Section~\ref{sec:uncrossing} we will prove Theorem~\ref{thm:main} by reducing it to the special case 
where no intersections between the edges of an optimal tour and the edges of a 2-optimal tour exist. 
In this special case we will show that we can partition the edge set of a 2-optimal tour into five sets 
that are each in some sense orientation-preserving with respect to
an optimal tour. The main step then is to prove that for each of these five sets we can bound the total edge length by 
$O(\log n / \log \log n)$ times the length of an optimal tour. 
To achieve this we will relate optimal tours and subsets of the edge set of a 2-optimal tour to some weighted arborescences. 
This relation is studied in Section~\ref{sec:proofidea}.
For weighted arborescences we will provide in Section~\ref{sec:arborescence-lemmas} some bounds for the edge weights. 
These results then will allow us in Section~\ref{sec:proof} to finish the proof of Theorem~\ref{thm:main}.

\section{Euclidean TSP and 2-Optimal Tours}
\label{sec:notation}

An instance of the Euclidean TSP is a finite subset $V\subset \mathbb{R}^2$. 
The task is to find a polygon of shortest total edge length that contains all points of $V$.
Note that by our definition a Euclidean TSP instance cannot contain the same point multiple times.
In the following we will denote the cardinality of $V$ by $n$.

For our purpose it is often more convenient to state the Euclidean Traveling Salesman Problem as a problem on graphs. 
For a given point set $V$ of a Euclidean TSP instance we take a complete graph on the vertex set $V$, i.e., the graph $G=(V,E)$
where $E$ is the set of all $\frac12  n(n-1)$ possible edges on $V$. We assign the Euclidean  
distance between the vertices in $G$ by a function  $c:E(G) \to \mathbb{R}_{> 0}$.
A \emph{tour} in $G$ is a cycle that contains all the vertices of $G$.
The \emph{length} of a tour $T$ in $G$ is defined as $c(T) := \sum_{e\in E(T)} c(e)$.  
An \emph{optimal tour} is a tour of minimum length among the tours in $G$. 
Thus we can restate the Euclidean TSP as a problem in graphs: 
Given a complete graph $G=(V,E)$ on a point set $V\subset \mathbb{R}^2$ and a Euclidean distance function $c:E(G) \to \mathbb{R}_{> 0}$, 
find an optimal tour in $G$. 
Throughout this paper we will use the geometric definition of the Euclidean TSP and the graph-theoretic version of the
Euclidean TSP simultaneously. Thus, a tour for a Euclidean TSP instance $V\subseteq \mathbb{R}^2$ can be viewed as a polygon in  $\mathbb{R}^2$ as well as
a cycle in a complete graph on the vertex set $V$ with Euclidean distance function.

Let $c:E(G)\to\mathbb{R}_{>0}$ be a weight function for the edges of some graph $G=(V,E)$. 
To simplify notation, we will denote the weight of an edge $\{x,y\} \in E(G)$ simply by $c(x,y)$ instead 
of the more cumbersome notation $c(\{x,y\})$. For subsets $F\subseteq E(G)$ we define
$c(F) := \sum_{e\in F} c(e)$. We extend this definition to subgraphs $H$ of $G$ by setting $c(H) := c(E(H))$.

The distance function $c$ of a Euclidean TSP instance $G=(V,E)$ satisfies the triangle inequality. Therefore we have 
for any set of three vertices $x, y, z\in V(G)$:
\begin{equation}
c(x,y) ~+~ c(y, z) ~ ~\ge~ ~ c(x, z).
\end{equation}

The \2opt repeatedly replaces two edges from the tour by two other edges such that the resulting tour is shorter.
Given a tour $T$ and two edges $\{a,b\}$ and $\{x,y\}$ in $T$, there are two possibilities to replace these two edges by two other edges.
Either we can choose the pair $\{a,x\}$ and $\{b,y\}$ or we can choose the pair $\{a,y\}$ and $\{b,x\}$. Exactly one of these two pairs
will result in a tour again. Without knowing the other edges of $T$, we cannot decide which of the two possibilities is the correct one.
Therefore, we will assume in the following that the tour $T$ is an \emph{oriented} cycle, i.e., the edges of $T$ have an orientation 
such that each vertex has exactly one incoming and one outgoing edge. Using this convention, there is only one possibility to
exchange a pair of edges such that the new edge set is a tour again: two directed edges $(a,b)$ and $(x,y)$ have to be replaced by
the edges $(a,x)$ and $(b,y)$. Note that to obtain an oriented cycle again, one has to reverse the direction of the segment between $b$ and $x$, 
see Figure~\ref{fig:2-Opt}.

\begin{figure}[t]
\centering
\begin{tikzpicture}[scale=0.25]
\tikzstyle{vertex}=[blue,circle,fill, minimum size=5, inner sep=0]
\tikzstyle{arrow}=[Straight Barb[length=1mm]]

\node[vertex, label=above:$$] (P1)  at (17, 0) {};
\node[vertex, label=above:$b$] (P2)  at (20, 8) {};
\node[vertex, label=above:$$] (P3)  at ( 6, 2) {};
\node[vertex, label=above:$~y$] (P4)  at (15, 4) {};
\node[vertex, label=above:$$] (P5)  at (22, 3) {};
\node[vertex, label=above:$x$] (P6)  at ( 8, 7) {};
\node[vertex, label=above:$$] (P7)  at ( 0, 4) {};
\node[vertex, label=above:$$] (P8)  at (12, 2) {};
\node[vertex, label=above:$$] (P9)  at ( 2, 9) {};
\node[vertex, label=above:$a$] (P10) at (13, 9) {};

\draw[-{Straight Barb[length=1mm]},  red, line width=1]   (P10) to (P2);
\draw[-{Straight Barb[length=1mm]},  red, line width=1]   (P6)  to (P4);
\draw[-{Straight Barb[length=1mm]},  line width=0.4] (P4)  to (P10);
\draw[-{Straight Barb[length=1mm]},  line width=0.4] (P2)  to (P5);
\draw[-{Straight Barb[length=1mm]},  line width=0.4] (P5)  to (P1);
\draw[-{Straight Barb[length=1mm]},  line width=0.4] (P1)  to (P8);
\draw[-{Straight Barb[length=1mm]},  line width=0.4] (P8)  to (P3);
\draw[-{Straight Barb[length=1mm]},  line width=0.4] (P3)  to (P7);
\draw[-{Straight Barb[length=1mm]},  line width=0.4] (P7)  to (P9);
\draw[-{Straight Barb[length=1mm]},  line width=0.4] (P9)  to (P6);

\begin{scope}[shift={(32,0)}]
\node[vertex, label=above:$$] (P1)  at (17, 0) {};
\node[vertex, label=above:$b$] (P2)  at (20, 8) {};
\node[vertex, label=above:$$] (P3)  at ( 6, 2) {};
\node[vertex, label=above:$~y$] (P4)  at (15, 4) {};
\node[vertex, label=above:$$] (P5)  at (22, 3) {};
\node[vertex, label=above:$x$] (P6)  at ( 8, 7) {};
\node[vertex, label=above:$$] (P7)  at ( 0, 4) {};
\node[vertex, label=above:$$] (P8)  at (12, 2) {};
\node[vertex, label=above:$$] (P9)  at ( 2, 9) {};
\node[vertex, label=above:$a$] (P10) at (13, 9) {};

\draw[-{Straight Barb[length=1mm]},  red, line width=1]   (P10) to (P6);
\draw[-{Straight Barb[length=1mm]},  red, line width=1]   (P2)  to (P4);
\draw[-{Straight Barb[length=1mm]},  line width=0.4] (P4)  to (P10);
\draw[-{Straight Barb[length=1mm]},  line width=0.4] (P5)  to (P2);
\draw[-{Straight Barb[length=1mm]},  line width=0.4] (P1)  to (P5);
\draw[-{Straight Barb[length=1mm]},  line width=0.4] (P8)  to (P1);
\draw[-{Straight Barb[length=1mm]},  line width=0.4] (P3)  to (P8);
\draw[-{Straight Barb[length=1mm]},  line width=0.4] (P7)  to (P3);
\draw[-{Straight Barb[length=1mm]},  line width=0.4] (P9)  to (P7);
\draw[-{Straight Barb[length=1mm]},  line width=0.4] (P6)  to (P9);

\end{scope}

\end{tikzpicture}
\caption{An oriented TSP tour (left) and the tour obtained after replacing the edges  $(a,b)$ and $(x,y)$ with the edges $(a,x)$ and $(b,y)$ (right).
The orientation of the tour segment between the vertices $b$ and $x$ has been reversed in the new tour.}
\label{fig:2-Opt}
\end{figure}
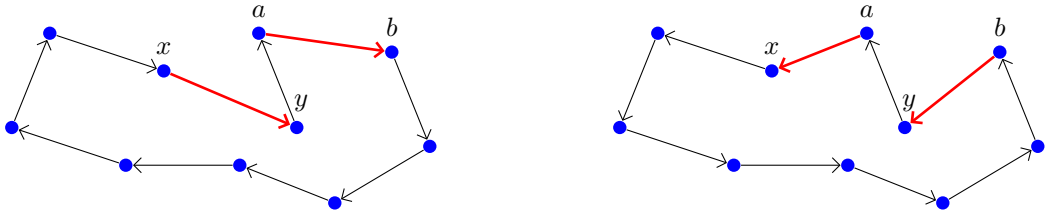

A TSP tour $T$ is called \emph{2-optimal} if for any two edges $(a,b)$ and $(x,y)$ of $T$ we have
\begin{equation}
 c(a,x) + c(b,y) ~\ge~ c(a,b) + c(x,y) 
 \label{eq:2-optimality}
\end{equation}
We call inequality~(\ref{eq:2-optimality}) the \emph{2-optimality condition}.

If $(a,b)$ and $(x,y)$ are two edges in a tour $T$ that violate the 2-optimality condition, i.e., they satisfy the inequality 
$ c(a,x) + c(b,y) < c(a,b) + c(x,y)$, 
then we can replace the edges $(a,b)$ and $(x,y)$ in $T$ by  the edges $(a,x)$ and $(b,y)$ and get a strictly shorter tour.
We call this operation of replacing the edges $(a,b)$ and $(x,y)$ in $T$ by  the edges $(a,x)$ and $(b,y)$
an \emph{improving 2-move}. Thus, the \2opt can be formulated as follows: \medskip

\framebox{\parbox{10cm}{
\noindent
{\bfseries 2-Opt Heuristic} ($V\subseteq \mathbb{R}^2$)\\[2mm]
1~ start with an arbitrary tour $T$ for $V$\\
2~ \texttt{while} there exists an improving 2-move in $T$ \\
3~ ~~~~  perform an improving 2-move\\
4~ \texttt{output} $T$}}\medskip

We call a Euclidean TSP instance  $V\subset \mathbb{R}^2$ \emph{degenerate} if there exists a line in $\mathbb{R}^2$ that
contains all points of $V$. Otherwise we call the instance \emph{non-degenerate}. 

It is easily seen that in a degenerate Euclidean TSP instance a 2-optimal tour is also an optimal tour:

\begin{proposition}
In a degenerate Euclidean TSP instance a 2-optimal tour is an optimal tour.
\label{prop:degenerate-case}
\end{proposition}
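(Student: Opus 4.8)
The plan is to exploit the fact that all points lie on a single line $\ell$. First I would set up coordinates so that $\ell$ is the $x$-axis, and relabel the points as $p_1, p_2, \ldots, p_n$ in increasing order of their $x$-coordinate (these coordinates are distinct since a Euclidean TSP instance contains no repeated point). In this one-dimensional situation the Euclidean distance is just $c(p_i, p_j) = |x_i - x_j|$, and it is a standard fact that the optimal tour is the one that goes $p_1, p_2, \ldots, p_n, p_1$, i.e.\ the ``back-and-forth'' tour, whose length equals $2(x_n - x_1)$. One clean way to see this lower bound: any tour must, for each consecutive pair of positions $p_i, p_{i+1}$, ``cross'' the gap between $x_i$ and $x_{i+1}$ at least twice (once in each direction, since the tour is a closed curve and the points $p_1,\dots,p_i$ lie on one side and $p_{i+1},\dots,p_n$ on the other), so $c(T) \ge 2\sum_{i=1}^{n-1}(x_{i+1}-x_i) = 2(x_n - x_1)$, with equality for the back-and-forth tour.

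The core of the argument is then to show that \emph{every} 2-optimal tour achieves this optimum, equivalently that any tour of length strictly greater than $2(x_n-x_1)$ admits an improving 2-move. Suppose $T$ is a tour with $c(T) > 2(x_n - x_1)$. Then for some consecutive gap $[x_i, x_{i+1}]$ the tour crosses it at least four times (an even number, and more than two since the total crossing count exceeds $2(n-1)$). Among the directed edges of $T$ that span this gap, pick two that cross it in the same direction, say $(a,b)$ and $(x,y)$ with $a, x$ both having $x$-coordinate $\le x_i$ and $b, y$ both having $x$-coordinate $\ge x_{i+1}$; orient the labelling so that going around $T$ from $b$ one reaches $x$ before returning to $a$ (so that $(a,b),(x,y)$ is a valid pair for a 2-move). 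Applying the triangle inequality / the one-dimensional distance formula, one checks $c(a,x) + c(b,y) < c(a,b) + c(x,y)$: indeed $c(a,b)$ and $c(x,y)$ each span the full gap $[x_i,x_{i+1}]$, whereas $c(a,x)$ stays on the left side (among points with coordinate $\le x_i$) and $c(b,y)$ stays on the right side, so neither of the new edges crosses the gap at all, and a short case analysis of the six points' coordinates on the line gives the strict inequality. This contradicts 2-optimality of $T$, so $c(T) = 2(x_n - x_1)$ and $T$ is optimal.

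The main obstacle I anticipate is purely bookkeeping: making the direction/ordering argument precise so that the two chosen gap-spanning edges really do form a legitimate 2-move pair (recall from the oriented-cycle convention that replacing $(a,b),(x,y)$ by $(a,x),(b,y)$ requires $x$ to lie on the $b$-to-$x$ segment of the cycle, not the $x$-to-$b$ one), and verifying the strict inequality $c(a,x)+c(b,y) < c(a,b)+c(x,y)$ through the handful of coordinate cases without error. Everything else — the coordinate setup, the crossing-count lower bound on tour length, and the optimality of the back-and-forth tour — is routine. The authors may well give a shorter proof (e.g.\ observing directly that a tour on collinear points with a ``nested'' or ``crossing'' pair of edges can always be shortened, so a 2-optimal tour must be the monotone back-and-forth tour), but the skeleton above is the version I would write out.
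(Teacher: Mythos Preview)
Your proposal is correct and follows essentially the same argument as the paper: both show that a non-optimal tour on collinear points must have some point (or gap) covered by at least three edges, hence by two directed edges pointing the same way along the line, and that such a pair violates the 2-optimality condition. The paper's version is just terser---it picks a single point of $S\setminus V$ covered by three edges rather than a consecutive gap---and note that your anticipated ``bookkeeping obstacle'' largely evaporates: once the tour is oriented, \emph{any} two directed edges $(a,b),(x,y)$ yield a valid 2-move replacing them by $(a,x),(b,y)$, so no extra relabelling is needed.
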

\begin{proof}
Let $V\subseteq \mathbb{R}^2$ be a degenerate Euclidean TSP instance. 
Let $c:V\times V \to \mathbb{R}$ be the Euclidean distance between two points in $V$.
Then there exist two points $a, b\in V$ such that the straight line segment $S$ from $a$ to $b$ contains all points of $V$. 
The length of an optimal TSP tour for $V$ is $2\cdot c(a,b)$. Assume there exists a $2$-optimal TSP tour $T$ that is not optimal. Orient the tour $T$. Then there must exist a point 
in $S\setminus V$ that is contained in at least three edges of the tour $T$ and therefore there must exist a point in $S\setminus V$ 
that is contained in two edges $(v,w)$ and $(x,y)$ of $T$ that are oriented in the same direction. This contradicts the 2-optimality of $T$ as 
$c(u,x) + c(w,y) < c(v,w) + c(x,y)$.   
\end{proof}

Because of Proposition~\ref{prop:degenerate-case} we may assume in the following that we have a non-degenerate Euclidean TSP instance. 

Let $T$ be a tour in a Euclidean TSP instance. Each edge of $T$ corresponds to a closed line segment in $\mathbb{R}^2$. 
A tour in a Euclidean TSP instance is called \emph{simple} if no two edges of the tour intersect in a point that lies in the interior of at least one
of the two corresponding line segments. 
For 2-optimal tours in Euclidean TSP instances we have the following simple but very important result. 

\begin{lemma}(Flood~\cite{Flo1956}) In a non-degenerate Euclidean TSP instance a 2-optimal tour is simple.
\label{lemma:nocrossing}
\end{lemma}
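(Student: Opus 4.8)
The statement to prove is Flood's lemma: in a non-degenerate Euclidean TSP instance a 2-optimal tour is simple, i.e.\ no two of its edges cross in the relative interior of a segment. The natural approach is by contradiction: suppose $T$ is 2-optimal but two edges $(a,b)$ and $(x,y)$ of $T$ cross at a point $p$ that lies in the interior of at least one of the two segments. I would first argue that we may assume $p$ lies in the interior of \emph{both} segments. If, say, $p$ is an endpoint of the segment $\overline{xy}$ (so $p\in\{x,y\}$) but interior to $\overline{ab}$, then $p$ is a vertex of the tour that lies strictly between $a$ and $b$ on the segment $\overline{ab}$; this is a degenerate-style configuration one can rule out directly using the 2-optimality condition applied to $(a,b)$ and the tour edge leaving $p$ in the appropriate direction, by the strict triangle inequality (the three points $a$, $p$, $b$ being collinear with $p$ strictly between). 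So the interesting case is a genuine transversal crossing of two segments at an interior point $p$ of each.

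\textbf{Main step.} In the genuine crossing case, apply the 2-optimality condition~(\ref{eq:2-optimality}) to the two oriented edges. Because $T$ is an oriented cycle, replacing $(a,b)$ and $(x,y)$ by $(a,x)$ and $(b,y)$ yields a tour again (reversing the segment between $b$ and $x$), so 2-optimality gives
\begin{equation}
c(a,x) + c(b,y) ~\ge~ c(a,b) + c(x,y).
\label{eq:flood-apply}
\end{equation}
On the other hand, since $\overline{ab}$ and $\overline{xy}$ cross at an interior point $p$ of both, we have $c(a,b) = c(a,p)+c(p,b)$ and $c(x,y) = c(x,p)+c(p,y)$. Grouping the pieces across $p$ and using the triangle inequality on the triangles $a,p,x$ and $b,p,y$ gives $c(a,x)\le c(a,p)+c(p,x)$ and $c(b,y)\le c(b,p)+c(p,y)$, hence $c(a,x)+c(b,y)\le c(a,b)+c(x,y)$. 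Combined with~(\ref{eq:flood-apply}), both triangle inequalities must hold with equality, which forces $a,p,x$ collinear and $b,p,y$ collinear. But two distinct lines meet in at most one point, and both of these lines pass through $p$; since the segments $\overline{ab}$ and $\overline{xy}$ genuinely cross (are not collinear with each other), the line through $a,p,x$ and the line through $b,p,y$ must each coincide with one of $\overline{ab}$, $\overline{xy}$ in a way that forces all four points $a,b,x,y$ onto a single line through $p$. One then checks this collinear configuration still contradicts strict 2-optimality (the relevant replacement strictly shortens the tour, exactly as in the proof of Proposition~\ref{prop:degenerate-case}), or else it contradicts non-degeneracy of the instance if it propagates to all of $V$ — either way we reach a contradiction.

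\textbf{Anticipated obstacle.} The routine part is the triangle-inequality bookkeeping; the delicate part is handling the equality/collinear boundary cases cleanly. In particular one must be careful that the two tour edges might share an endpoint, or that the "crossing point" coincides with a vertex of $V$ — but the definition of \emph{simple} only forbids crossings in the interior of a segment, and a shared endpoint is not such a crossing, so the case analysis is: (i) transversal interior crossing, handled by the equality argument above; (ii) a vertex of $T$ lying in the relative interior of another edge, handled by a direct strict-triangle-inequality 2-move as in Proposition~\ref{prop:degenerate-case}. Making sure these two cases are exhaustive and that the equality case in (i) genuinely collapses to a sub-case of (ii) is where the care is needed; the non-degeneracy hypothesis is exactly what is invoked to exclude a fully collinear instance sneaking through.
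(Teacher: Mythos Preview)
The paper does not give its own proof of this lemma; it simply states the result and cites Flood~\cite{Flo1956}. So there is nothing to compare against, and your plan is the standard argument that Flood's observation encodes.

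Your main step is correct and is exactly the classical proof: for a genuine transversal crossing at an interior point $p$ of both segments, the triangle inequalities $c(a,x)\le c(a,p)+c(p,x)$ and $c(b,y)\le c(b,p)+c(p,y)$ are \emph{strict} (since $a,p,x$ cannot be collinear when $\overline{ab}$ and $\overline{xy}$ lie on distinct lines through $p$), and this strictly contradicts~(\ref{eq:2-optimality}). That already disposes of the case you call (i) without any further equality analysis; your ``forces all four points collinear'' paragraph is in fact showing that equality is impossible in a transversal crossing, which is the cleaner way to phrase it.

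The only place to tighten is case (ii) and the collinear residue of case (i). Your reduction ``handled by a direct strict-triangle-inequality 2-move as in Proposition~\ref{prop:degenerate-case}'' is not quite complete as written: if a vertex $p$ lies in the interior of $\overline{ab}$ and \emph{both} tour neighbours of $p$ happen to lie on the line through $a$ and $b$, then neither 2-move at $p$ is strictly improving, and collinearity does not automatically ``propagate to all of $V$'' just from the four points $a,b,x,y$. What actually works is to walk along the tour from $p$: each time the current 2-move is not strictly improving, the next tour vertex is forced onto the same line, and since the tour visits every vertex and the instance is non-degenerate, you must eventually reach a vertex off the line, at which point the corresponding 2-move is strictly improving. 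This is exactly the mechanism behind Proposition~\ref{prop:degenerate-case}, but it should be spelled out as a propagation along the tour rather than left implicit. Once you make that explicit, the argument is complete.
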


\subsection{Crossing-Free Pairs of Tours}
\label{sec:uncrossing}

Let $T$ be an optimal tour and $T'$ be a 2-optimal tour in a non-degenerate Euclidean TSP instance. 
By Lemma~\ref{lemma:nocrossing} we know that both tours are simple. In the following we want to justify a much stronger assumption.
Two edges $e\in E(T)$ and $f\in E(T')$ \emph{cross} if $e$ and $f$ intersect in exactly one point in $\mathbb{R}^2$ 
and this point is in the interior of both line segments.  
We say that two tours $T$ and $T'$ are \emph{crossing-free} if there does not exist a pair of crossing edges.
See Figure~\ref{fig:crossingEdges} for an example of an optimal tour and a 2-optimal tour that have three crossing pairs of edges.

To prove Theorem~\ref{thm:main} it will be enough to prove it for the special case of 
crossing-free tours: 

\begin{theorem} Let $V\subseteq \mathbb{R}^2$ with $|V| = n$ be a non-degenerate Euclidean TSP instance, 
$T$ an optimal tour for $V$ and $S$ a 2-optimal tour for $V$.
If $T$ and $S$ are crossing-free then the length of $S$ is bounded by $O(\log n / \log \log n)$ times the length of $T$.
\label{thm:main-crossingfree}
\end{theorem}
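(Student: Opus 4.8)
The plan is to fix the optimal tour $T$ and the 2-optimal tour $S$, orient both of them, and exploit the crossing-free hypothesis to expose a nested (laminar) structure among the edges of $S$, which I then encode by weighted arborescences. Since both tours are simple by Lemma~\ref{lemma:nocrossing} and, by assumption, no edge of $T$ crosses an edge of $S$, each oriented edge $e=(a,b)$ of $S$ together with one of the two arcs of $T$ between $a$ and $b$ bounds a region of the plane, and using the orientation of $T$ one can pick this arc canonically. The first thing I would establish is a clean geometric dichotomy: for two edges of $S$ the canonically subtended $T$-arcs are either nested or essentially disjoint. This is exactly the property that later lets the nesting relation define a tree.

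The second step is the partition. Using the orientations of $T$ and of $S$ I would split $E(S)$ into five classes $S_1,\dots,S_5$ so that within each class the assignment "edge $\mapsto$ subtended arc" is orientation-preserving. The purpose of restricting to one orientation class is that there the subtended arcs form a genuine laminar family, so the relation "immediately enclosing $S$-edge" turns $S_i$ into a rooted forest; adjoining a virtual root that represents all of $T$ yields a single arborescence per class. It then suffices to prove $c(S_i) = O(\log n/\log\log n)\cdot c(T)$ for each of the five classes and sum, losing only a constant factor.

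Third, I would set up the correspondence between a class and a weighted arborescence along the lines of Section~\ref{sec:proofidea}: the nodes are the $S$-edges of the class, and the weights are read off from the Euclidean edge lengths and from the $T$-arc lengths separating the endpoints of a child edge from those of its parent. The decisive input is the 2-optimality condition~(\ref{eq:2-optimality}): applied to a parent edge $(a,b)$ and a nested child edge $(a',b')$ it gives $c(a,a')+c(b,b') \ge c(a,b)+c(a',b')$, which, combined with the triangle inequality, controls how much the length of a child edge can exceed the $T$-length of the gap it spans. These are precisely the edge-weight inequalities required by the arborescence bounds to be established in Section~\ref{sec:arborescence-lemmas}, so I may invoke them to conclude that the total node weight of each arborescence, namely $c(S_i)$, is $O(\log n/\log\log n)$ times the weight of its root $c(T)$.

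I expect the main obstacle to lie in two places. The first is verifying that five orientation-preserving classes really suffice and that each is honestly laminar: here the Euclidean geometry — angles, the consistent turning of an oriented tour, and the crossing-free hypothesis — must be pinned down so that the nesting relation is well defined and the 2-optimality exchange always applies to the correct pair of edges. The second, and the true source of the improvement from $O(\log n)$ to $O(\log n/\log\log n)$, is the arborescence bound itself: one must show that the 2-optimality-induced constraints force a quantitative trade-off between the depth of the arborescence and its branching, so that balancing a fan-out of roughly $\log n$ against a depth of roughly $\log n/\log\log n$ (using $n$ as the bound on the number of nodes) beats the naive per-scale charging that only yields $O(\log n)$.
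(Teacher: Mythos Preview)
Your proposal is correct and follows essentially the same route as the paper. Your five classes are the paper's $S_1',S_1'',S_2',S_2'',S_3$ (interior/exterior of $T$, each split by orientation, plus edges lying on $T$); your ``subtended-arc'' laminar family is exactly what the paper obtains by taking the geometric dual of the plane graph $T\cup S_i'$, and the parent--child inequality you extract from the 2-optimality condition plus the triangle inequality is the paper's combined 2-optimality condition~(\ref{eq:combined-2-optimality}), which together with the combined triangle inequality~(\ref{eq:combined-triangle-inequality}) feeds into Lemma~\ref{lemma:mainlemma}. One small correction of emphasis: the orientation split is not needed to make the chords laminar (non-crossing interior chords of a simple polygon already give a tree in the dual), but rather to guarantee that on the boundary of each region the parent $S$-edge and any child $S$-edge are oriented oppositely, which is what makes the 2-opt exchange~(\ref{eq:2-optimality}) yield a useful inequality; also note that the right-hand side of that inequality must include the $c$-weights of the sibling edges, not just the $T$-arc length, as in~(\ref{eq:combined-2-optimality}).
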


\begin{figure}[t]
\centering
\begin{tikzpicture}[scale=0.43, fill=gray]

\coordinate  (P1) at (11, 12);
\coordinate  (P2) at ( 9,  3);
\coordinate  (P3) at ( 0,  7);
\coordinate  (P4) at (11,  4);
\coordinate  (P5) at ( 5, 13);
\coordinate  (P6) at ( 3, 15);
\coordinate  (P7) at ( 4, 12);
\coordinate  (P8) at (11,  2);
\coordinate  (P9) at ( 6,  9);
\coordinate (P10) at (10,  5);
\coordinate (P11) at (12, 14);
\coordinate (P12) at (14, 13);

\filldraw[fill=black!10!white, draw=red, line width = 1.5] 
(P1) -- (P11) -- (P12) -- (P10) -- (P4) -- (P8) -- (P2) -- (P9) -- (P3) -- (P7) -- (P6) -- (P5) -- cycle;

\draw[color=green!80!black, dash pattern = on 1.5mm off 1.5mm, line width = 1.5] 
(P1) -- (P4) -- (P8) -- (P2) -- (P10) -- (P3) -- (P7) -- (P6) -- (P5) -- (P9) -- (P11) -- (P12) -- cycle;

\tikzstyle{sal}=[-{Straight Barb[length=1.5mm]}, color=blue!70!white, line width = 1.5] 

\foreach \i in {1,...,12}
  \fill[black] (P\i) circle (2.5mm);

\begin{scope}[shift={(17,0)}]
\coordinate  (P1) at (11, 12);
\coordinate  (P2) at ( 9,  3);
\coordinate  (P3) at ( 0,  7);
\coordinate  (P4) at (11,  4);
\coordinate  (P5) at ( 5, 13);
\coordinate  (P6) at ( 3, 15);
\coordinate  (P7) at ( 4, 12);
\coordinate  (P8) at (11,  2);
\coordinate  (P9) at ( 6,  9);
\coordinate (P10) at (10,  5);
\coordinate (P11) at (12, 14);
\coordinate (P12) at (14, 13);

\filldraw[fill=black!10!white, draw=red, line width = 1.5] 
(P1) -- (P11) -- (P12) -- (P10) -- (P4) -- (P8) -- (P2) -- (P9) -- (P3) -- (P7) -- (P6) -- (P5) -- cycle;

\draw[color=green!80!black, dash pattern = on 1.5mm off 1.5mm, line width = 1.5] 
(P1) -- (P4) -- (P8) -- (P2) -- (P10) -- (P3) -- (P7) -- (P6) -- (P5) -- (P9) -- (P11) -- (P12) -- cycle;

\coordinate (P13) at (11.000000,  7.000000);
\coordinate (P14) at ( 9.833333, 12.194444); 
\coordinate (P15) at ( 7.777777,  5.444444);

\foreach \i in {1,...,12}
  \fill[black] (P\i) circle (2.5mm);

\foreach \i in {13,...,15}
  \fill[color=blue!70!white] (P\i) circle (2.5mm);

\end{scope}

\end{tikzpicture}
\caption{A Euclidean TSP instance with an optimal tour (red edges) and a 2-optimal tour (dashed green edges). Both tours shown in the left picture 
are simple but there are three pairs of crossing edges. The tours can be made crossing-free by adding three vertices (blue points in the right picture) to the instance.}
\label{fig:crossingEdges}
\end{figure}
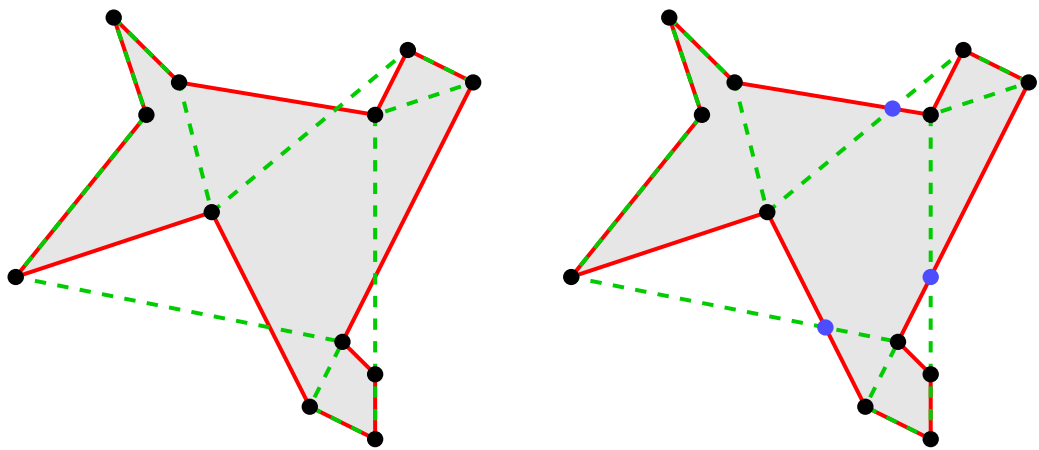

The proof of Theorem~\ref{thm:main-crossingfree} will be presented in Section~\ref{sec:proof}. Here we show how Theorem~\ref{thm:main-crossingfree}
allows to prove Theorem~\ref{thm:main}. For this we describe a method to transform a pair of tours into a crossing-free pair of tours.

Let $V\subseteq \mathbb{R}^2$ be a Euclidean TSP instance and $T$ be a tour for $V$. We say that $V'\subseteq \mathbb{R}^2$ 
is a \emph{subdivision} for $(V,T)$ if $V \subset V'$ and $V'$ is a subset of the polygon $T$.  The set $V'$ \emph{induces} a new tour $T'$ which results from 
the tour $T$ by subdividing the edges by points in $V'\setminus V$. Note that $T$ and $T'$ constitute the same polygon.
Therefore we have:

\begin{proposition}
Let $V\subseteq \mathbb{R}^2$ be a Euclidean TSP instance and $T$ be an optimal tour. 
If $V'$ is a subdivision  for $(V,T)$ then the tour $T'$ induced by $V'$ is an optimal tour for $V'$.
\label{prop:optimality-preserved}
\end{proposition}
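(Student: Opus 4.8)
The plan is to argue by contradiction, using only the triangle inequality. First I would record the one structural observation that makes the statement nearly immediate, namely that $c(T')=c(T)$. This is because $T$ and $T'$ constitute the same polygon in $\mathbb{R}^2$ (as already noted in the text): subdividing a line segment by finitely many points lying on that segment replaces it by sub-segments whose Euclidean lengths sum to the length of the original segment, so the total length is unchanged.

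Next, suppose for contradiction that $T'$ is \emph{not} an optimal tour for $V'$. Then there is a tour $T''$ for $V'$ with $c(T'')<c(T')=c(T)$. Since $V'$ is finite and $V\subseteq V'$, the set $V'\setminus V$ is finite, and $T''$ visits every vertex of $V'$, in particular every vertex of $V$. I would then repeatedly \emph{shortcut} $T''$ over the extra vertices: as long as the current cycle contains some $v\in V'\setminus V$, let $x$ and $y$ be the two neighbours of $v$ on that cycle and replace the edges $\{x,v\}$ and $\{v,y\}$ by the single edge $\{x,y\}$. By the triangle inequality $c(x,y)\le c(x,v)+c(v,y)$, so this step does not increase the total length, it removes $v$, and it leaves a Hamiltonian cycle on the remaining vertex set. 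After finitely many such steps we obtain a tour $\hat T$ for $V$ with $c(\hat T)\le c(T'')$.

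Combining these inequalities yields $c(\hat T)\le c(T'')<c(T)$, which contradicts the optimality of $T$ for $V$. Hence $T'$ must be an optimal tour for $V'$.

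I do not expect a genuine obstacle here; the argument is routine. The only points deserving a word of care are that the shortcutting operation preserves a valid tour and terminates (it strictly decreases the number of vertices in $V'\setminus V$ still present, which is finite), and that $\hat T$ is indeed a legitimate tour for the instance $V$ — which is fine, since $V$ contains no repeated points and we work in the complete graph on $V$, so the edge $\{x,y\}$ introduced by a shortcut always exists.
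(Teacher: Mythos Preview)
Your argument is correct. The paper does not actually give a proof of this proposition; it simply records the observation that $T$ and $T'$ constitute the same polygon and then states the proposition as an immediate consequence. Your write-up makes explicit the one step the paper leaves implicit, namely that any tour on $V'$ can be shortcut (via the triangle inequality) to a tour on $V$ of no greater length, so optimality of $T$ for $V$ forces optimality of $T'$ for $V'$. This is exactly the intended reasoning, just spelled out in full.
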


Subdividing a tour not only preserves the optimality but it also preserves the 2-optimality:

\begin{lemma}
Let $V\subseteq \mathbb{R}^2$ be a Euclidean TSP instance and $T$ be a 2-optimal tour. 
If $V'$ is a subdivision  for $(V,T)$ then the tour $T'$ induced by $V'$ is a 2-optimal tour for $V'$.
\label{lemma:2-optimality-preserved}
\end{lemma}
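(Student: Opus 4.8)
The plan is to verify the 2-optimality condition~(\ref{eq:2-optimality}) for $T'$ directly. First note that $T'$ carries a natural orientation inherited from $T$: each directed edge $(p,q)$ of $T$ becomes a directed path in $T'$ that starts at $p$, then visits the points of $V'\setminus V$ lying on the segment $[p,q]$ in the order in which they occur when moving from $p$ to $q$, and ends at $q$, with every edge of this path oriented from $p$ towards $q$. Consequently every directed edge of $T'$ is a sub-segment of a unique directed edge of $T$, and it suffices to prove $c(a,x)+c(b,y)\ge c(a,b)+c(x,y)$ for every pair of distinct directed edges $(a,b)$ and $(x,y)$ of $T'$.

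Suppose first that $(a,b)$ and $(x,y)$ lie on one and the same directed edge $(p,q)$ of $T$. Then they are two distinct, hence non-overlapping, segments of the path into which $(p,q)$ was subdivided, so the four points occur along $[p,q]$ either in the order $p,a,b,x,y,q$ or in the order $p,x,y,a,b,q$ (possibly with $b=x$, respectively $y=a$). Since the four points are collinear, a short one-dimensional computation shows that $c(a,x)+c(b,y)-c(a,b)-c(x,y)$ equals $2\,c(b,x)\ge 0$ in the first case and $2\,c(a,y)\ge 0$ in the second, settling this case.

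The substantial case is when $(a,b)$ lies on a directed edge $e=(p,q)$ of $T$ while $(x,y)$ lies on a \emph{different} directed edge $f=(r,s)$ of $T$; by the inherited orientation, $a$ lies between $p$ and $b$ on $[p,q]$, and $x$ lies between $r$ and $y$ on $[r,s]$. The crucial ingredient is the following one-sided statement: \emph{if $c(p,r)+c(q,s)\ge c(p,q)+c(r,s)$ and $a,b$ occur on the segment $[p,q]$ in the order $p,a,b,q$, then $c(a,r)+c(b,s)\ge c(a,b)+c(r,s)$.} I would derive this from the identity $c(p,a)+c(a,b)+c(b,q)=c(p,q)$ (valid since $a$ and $b$ lie on $[p,q]$) together with the triangle inequalities $c(a,r)\ge c(p,r)-c(p,a)$ and $c(b,s)\ge c(q,s)-c(b,q)$, which combine to give
\begin{align*}
c(a,r)+c(b,s)-c(a,b)
&\ge\big(c(p,r)-c(p,a)\big)+\big(c(q,s)-c(b,q)\big)-c(a,b)\\
&=c(p,r)+c(q,s)-\big(c(p,a)+c(a,b)+c(b,q)\big)\\
&=c(p,r)+c(q,s)-c(p,q)\ \ge\ c(r,s),
\end{align*}
the last step being the 2-optimality condition~(\ref{eq:2-optimality}) for the edges $e$ and $f$ of the 2-optimal tour $T$.

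It then remains to apply this statement twice. Its conclusion $c(a,r)+c(b,s)\ge c(a,b)+c(r,s)$ is exactly the hypothesis of the same statement with $[r,s]$ playing the role of $[p,q]$ and $[a,b]$ playing the role of $[r,s]$; applying it once more, now with the points $x,y$ occurring on $[r,s]$ in the order $r,x,y,s$, yields $c(x,a)+c(y,b)\ge c(x,y)+c(a,b)$, which is precisely the 2-optimality condition for the edges $(a,b)$ and $(x,y)$ of $T'$. The inequality $c(p,r)+c(q,s)\ge c(p,q)+c(r,s)$ that starts this chain is furnished by the 2-optimality of $T$. I expect the main obstacle to be the one-sided lemma above, and in particular the observation that the orientation must be respected: the corresponding inequality for two \emph{crossing} sub-segments is false in general, so the whole argument hinges on $T'$ inheriting a consistent orientation from $T$.
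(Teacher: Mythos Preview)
Your proof is correct and follows essentially the same idea as the paper's: use the 2-optimality of $T$ for the parent edges together with the triangle inequality to pass to the sub-segments. The paper carries this out in a single four-term computation (subtracting the offsets $c(a,a'),c(b,b'),c(x,x'),c(y,y')$ at once), whereas you factor it into a one-sided reduction lemma applied twice and treat the case of two sub-edges on the same parent edge separately; the latter case is in fact glossed over in the paper's argument, so your version is slightly more careful, but the underlying mechanism is identical.
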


\begin{proof}
Let us assume that the tour $T'$ is oriented and that $(x',y')$ and $(a',b')$ are two edges of $T'$. We have to prove that 
these two edges satisfy the 2-optimality condition~(\ref{eq:2-optimality}). 
As $T'$ is a subdivision of the 2-optimal tour $T$ we know that there exist edges $(x,y)$ and $(a,b)$ in $T$ such that
the line segment $a'b'$ is contained in the line segment $ab$ and the line segment $x'y'$ is contained in the line segment 
$xy$. The 2-optimality of $T$ implies
\[ c(a,b) + c(x,y) ~\le~ c(a,x) + c(b,y) .\]
Using this inequality and the triangle inequality we get:
\begin{eqnarray*}
c(a',b') + c(x', y') & =   & c(a,b) - c(a,a') - c(b,b') + c(x,y) - c(x,x') - c(y,y') \\
                     & \le & c(a,x) - c(a,a') - c(x,x') + c(b,y) - c(b,b') - c(y,y') \\
                     & \le & c(a',x') + c(b',y') .
\end{eqnarray*}             
\end{proof}

Now we are able to reduce Theorem~\ref{thm:main} to Theorem~\ref{thm:main-crossingfree}:

\noindent
\textit{Proof of Theorem~\ref{thm:main}:~}
Let $V\subseteq \mathbb{R}^2$ be a Euclidean TSP instance with $|V| = n$,  $T$ be an optimal tour for $V$ and
$S$ be a 2-optimal tour for $V$. 
By Proposition~\ref{prop:degenerate-case} we may assume that $V$ is non-degenerate.
Let $V'\subseteq \mathbb{R}^2$ be the set of points obtained by adding to $V$ all
crossings between pairs of edges in $T$ and $S$. 
Denote the cardinality of $V'$ by $n'$. 
Let $T'$ and $S'$ be the tours induced by $V'$ for $T$ and $S$. 
Then by Proposition~\ref{prop:optimality-preserved}
and by Lemma~\ref{lemma:2-optimality-preserved} we know that $T'$ has the same length as $T$ and is an optimal tour for $V'$ 
and $S'$ has the same length as $S$ and is a 2-optimal tour for $V'$.
Now Theorem~\ref{thm:main-crossingfree} implies that the length of $S$ is at most 
$O(\log n' / \log \log n')$ times the length of $T$. It remains to observe that there can be at most $O(n^2)$ crossings between 
edges in $T$ and $S$ and therefore $O(\log n' / \log \log n') = O(\log n / \log \log n)$.
\qed
\medskip

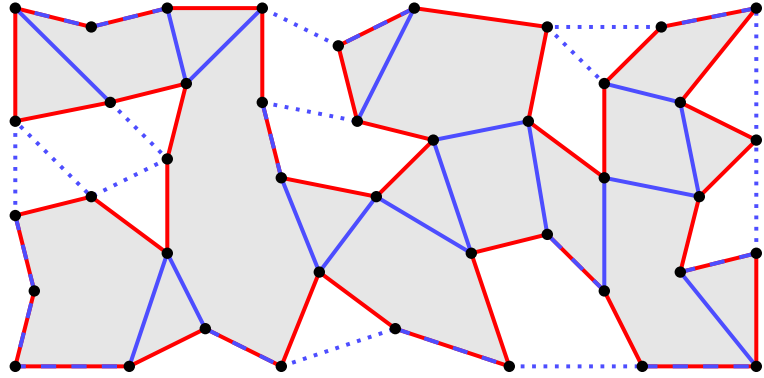
\begin{figure}[t]
\centering
\begin{tikzpicture}[scale=0.25, fill=gray]

\coordinate  (P1) at (32,  5);
\coordinate  (P2) at (18, 18);
\coordinate  (P3) at (27,  1);
\coordinate  (P4) at (15,  1);
\coordinate  (P5) at (23, 13);
\coordinate  (P6) at (40, 13);
\coordinate  (P7) at (34,  1);
\coordinate  (P8) at (15, 11);
\coordinate  (P9) at (11,  3);
\coordinate (P10) at (32, 11);
\coordinate (P11) at (35, 19);
\coordinate (P12) at ( 5, 10);
\coordinate (P13) at (21,  3);
\coordinate (P14) at (29, 19);
\coordinate (P15) at (25,  7);
\coordinate (P16) at (40,  7);
\coordinate (P17) at (36, 15);
\coordinate (P18) at (10, 16);
\coordinate (P19) at ( 1, 20);
\coordinate (P20) at (32, 16);
\coordinate (P21) at ( 9,  7);
\coordinate (P22) at (28, 14);
\coordinate (P23) at (36,  6);
\coordinate (P24) at ( 2,  5);
\coordinate (P25) at ( 7,  1);
\coordinate (P26) at (40,  1);
\coordinate (P27) at ( 1, 14);
\coordinate (P28) at ( 9, 20);
\coordinate (P29) at (20, 10);
\coordinate (P30) at (17,  6);
\coordinate (P31) at ( 6, 15);
\coordinate (P32) at (14, 15);
\coordinate (P33) at (22, 20);
\coordinate (P34) at (29,  8);
\coordinate (P35) at ( 1,  9);
\coordinate (P36) at (40, 20);
\coordinate (P37) at ( 9, 12);
\coordinate (P38) at (14, 20);
\coordinate (P39) at (37, 10);
\coordinate (P40) at ( 1,  1);
\coordinate (P41) at (19, 14);
\coordinate (P42) at ( 5, 19);

\filldraw[fill=black!10!white, draw=red, line width = 1.5] 
(P26) -- (P16) -- (P23) -- (P39) --  (P6) -- (P17) -- (P36) -- (P11) -- (P20) -- (P10) -- 
(P22) -- (P14) -- (P33) --  (P2) -- (P41) --  (P5) -- (P29) --  (P8) -- (P32) -- (P38) -- 
(P28) -- (P42) -- (P19) -- (P27) -- (P31) -- (P18) -- (P37) -- (P21) -- (P12) -- (P35) -- 
(P24) -- (P40) -- (P25) --  (P9) --  (P4) -- (P30) -- (P13) --  (P3) -- (P15) -- (P34) -- 
 (P1) --  (P7) -- cycle;

\draw[dash pattern = on 2mm off 2mm, color=blue!70!white, line width = 1.5] 
(P23) -- (P16)    (P36) -- (P11)     (P1) -- (P34)     (P8) -- (P32)    (P33) --  (P2) 
(P28) -- (P42) -- (P19)    (P35) -- (P24) -- (P40) -- (P25)     (P9) --  (P4)    (P13) -- 
 (P3)     (P7) -- (P26);

\draw[dash pattern = on 0.5mm off 1mm, color=blue!70!white, line width = 1.5] 
(P16) --  (P6) -- (P36)    (P11) -- (P14) -- (P20)    (P32) -- (P41)     (P2) -- (P38) 
(P31) -- (P37) -- (P12) -- (P27) -- (P35)     (P4) -- (P13)     (P3) --  (P7);

\draw[color=blue!70!white, line width = 1.5] 
(P26) -- (P23)    (P20) -- (P17) -- (P39) -- (P10) --  (P1)    (P34) -- (P22) --  (P5) -- 
(P15) -- (P29) -- (P30) --  (P8)    (P41) -- (P33)    (P38) -- (P18) -- (P28)    (P19) -- 
(P31)    (P25) -- (P21) --  (P9);

\foreach \i in {1,...,42}
  \fill[black] (P\i) circle (3mm);

\end{tikzpicture}
\caption{A Euclidean TSP instance with an optimal tour $T$ (red edges) and a 2-optimal tour 
(blue edges) that are crossing-free. The edges of the 2-optimal tour are partitioned into the 
edges lying in the interior of $T$ (solid blue lines), the edges that lie in the exterior of $T$
(dotted blue lines), and the edges that are part of $T$ (dashed blue edges).} 
\label{fig:crossingfreetours}
\end{figure}

\subsection{Partitioning the Edge Set of a 2-Optimal Tour}
\label{sec:edge-partition}

Let $V \subseteq\mathbb{R}^2$ be a non-degenerate Euclidean TSP instance, $T$ be an optimal tour and $S$ be a 2-optimal tour such that
$S$ and $T$ are crossing-free. As $S$ and $T$ are simple polygons and $S$ and $T$ are crossing-free we can partition the edge set of $S$
into three sets $S_1$, $S_2$, and $S_3$ such that all edges of $S_1$ lie in the interior of $T$, all edges of $S_2$ lie in the exterior of $T$
and all edges of $S_3$ are contained in $T$ (see Figure~\ref{fig:crossingfreetours}). 
More precisely, an edge $\{a, b\} \in S$ belongs to $S_1$ resp.\ $S_2$ if the corresponding open line segment $ab$ completely
lies in the interior resp.\ exterior of the polygon $T$. The set $S_3$ contains all the edges of $S$ that are subsets 
of the polygon $T$.

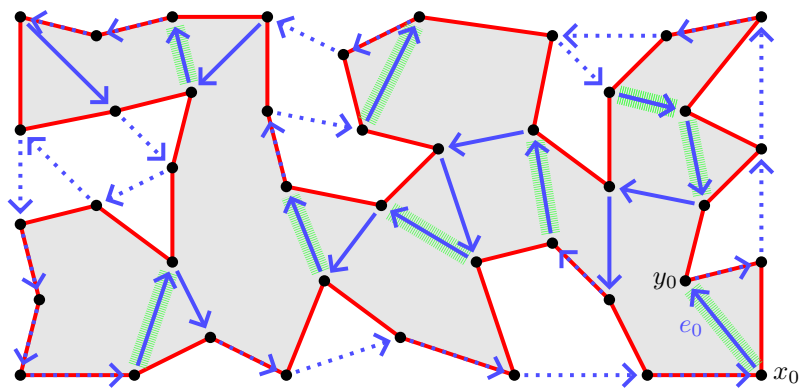
\begin{figure}[t]
\centering
\begin{tikzpicture}[scale=0.25, fill=gray]

\coordinate  (P1) at (32,  5);
\coordinate  (P2) at (18, 18);
\coordinate  (P3) at (27,  1);
\coordinate  (P4) at (15,  1);
\coordinate  (P5) at (23, 13);
\coordinate  (P6) at (40, 13);
\coordinate  (P7) at (34,  1);
\coordinate  (P8) at (15, 11);
\coordinate  (P9) at (11,  3);
\coordinate (P10) at (32, 11);
\coordinate (P11) at (35, 19);
\coordinate (P12) at ( 5, 10);
\coordinate (P13) at (21,  3);
\coordinate (P14) at (29, 19);
\coordinate (P15) at (25,  7);
\coordinate (P16) at (40,  7);
\coordinate (P17) at (36, 15);
\coordinate (P18) at (10, 16);
\coordinate (P19) at ( 1, 20);
\coordinate (P20) at (32, 16);
\coordinate (P21) at ( 9,  7);
\coordinate (P22) at (28, 14);
\coordinate (P23) at (36,  6);
\coordinate (P24) at ( 2,  5);
\coordinate (P25) at ( 7,  1);
\coordinate (P26) at (40,  1);
\coordinate (P27) at ( 1, 14);
\coordinate (P28) at ( 9, 20);
\coordinate (P29) at (20, 10);
\coordinate (P30) at (17,  6);
\coordinate (P31) at ( 6, 15);
\coordinate (P32) at (14, 15);
\coordinate (P33) at (22, 20);
\coordinate (P34) at (29,  8);
\coordinate (P35) at ( 1,  9);
\coordinate (P36) at (40, 20);
\coordinate (P37) at ( 9, 12);
\coordinate (P38) at (14, 20);
\coordinate (P39) at (37, 10);
\coordinate (P40) at ( 1,  1);
\coordinate (P41) at (19, 14);
\coordinate (P42) at ( 5, 19);

\filldraw[fill=black!10!white, draw=red, line width = 1.5] 
(P26) -- (P16) -- (P23) -- (P39) --  (P6) -- (P17) -- (P36) -- (P11) -- (P20) -- (P10) -- 
(P22) -- (P14) -- (P33) --  (P2) -- (P41) --  (P5) -- (P29) --  (P8) -- (P32) -- (P38) -- 
(P28) -- (P42) -- (P19) -- (P27) -- (P31) -- (P18) -- (P37) -- (P21) -- (P12) -- (P35) -- 
(P24) -- (P40) -- (P25) --  (P9) --  (P4) -- (P30) -- (P13) --  (P3) -- (P15) -- (P34) -- 
 (P1) --  (P7) -- cycle;

\foreach \i in {1,...,42}
  \node[circle, minimum size = 2.5mm, inner sep = 0mm] (N\i) at (P\i) {};

\node[circle, minimum size = 2.5mm, inner sep = 0mm, label=right:\hspace*{-1mm}$x_0$] (N26) at (P26) {};
\node[circle, minimum size = 2.5mm, inner sep = 0mm, label=left:$y_0$\hspace*{-1.6mm}]  (N23) at (P23) {};


\tikzstyle{sal}=[-{Straight Barb[length=1.5mm]}, color=blue!70!white, line width = 1.5] 

\tikzstyle{dal}=[dash pattern = on 0.5mm off 1mm, -{Straight Barb[length=1.5mm]}, color=blue!70!white, line width = 1.5] 

\tikzstyle{hl}=[dash pattern = on 0.1mm off 0.2mm, color=green!70!white, line width = 7] 

\draw[hl] (N26)  -- (N23); 
\draw[hl]  (N20) -- (N17); 
\draw[hl]  (N20) -- (N17); 
\draw[hl]  (N17) -- (N39); 
\draw[hl]  (N34) -- (N22); 
\draw[hl]  (N15) -- (N29); 
\draw[hl]  (N30) --  (N8); 
\draw[hl]  (N41) -- (N33); 
\draw[hl]  (N18) -- (N28); 
\draw[hl]  (N25) -- (N21); 

\draw[sal] (N26)  to node[left] {$e_0~$} (N23);
\draw[dal] (N23) -- (N16);
\draw[dal] (N16) -- (N6);
\draw[dal]  (N6) -- (N36);
\draw[dal] (N36) -- (N11);
\draw[dal] (N11) -- (N14);
\draw[dal] (N14) -- (N20);
\draw[sal] (N20) -- (N17);
\draw[sal] (N20) -- (N17);
\draw[sal] (N17) -- (N39);
\draw[sal] (N39) -- (N10);
\draw[sal] (N10) --  (N1);
\draw[dal]  (N1) -- (N34);
\draw[sal] (N34) -- (N22);
\draw[sal] (N22) --  (N5);
\draw[sal]  (N5) -- (N15);
\draw[sal] (N15) -- (N29);
\draw[sal] (N29) -- (N30);
\draw[sal] (N30) --  (N8);
\draw[dal]  (N8) -- (N32);
\draw[dal] (N32) -- (N41);
\draw[sal] (N41) -- (N33);
\draw[dal] (N33) --  (N2);
\draw[dal]  (N2) -- (N38);
\draw[sal] (N38) -- (N18);
\draw[sal] (N18) -- (N28);
\draw[dal] (N28) -- (N42);
\draw[dal] (N42) -- (N19);
\draw[sal] (N19) -- (N31);
\draw[dal] (N31) -- (N37);
\draw[dal] (N37) -- (N12);
\draw[dal] (N12) -- (N27);
\draw[dal] (N27) -- (N35);
\draw[dal] (N35) -- (N24);
\draw[dal] (N24) -- (N40);
\draw[dal] (N40) -- (N25);
\draw[sal] (N25) -- (N21);
\draw[sal] (N21) --  (N9);
\draw[dal]  (N9) --  (N4);
\draw[dal]  (N4) -- (N13);
\draw[dal] (N13) --  (N3);
\draw[dal]  (N3) --  (N7);
\draw[dal]  (N7) -- (N26);

\foreach \i in {1,...,42}
  \fill[black] (P\i) circle (3mm);

\end{tikzpicture}
\caption{The edge $e_0=(x_0,y_0)$ of the 2-optimal tour (blue edges) defines the set $S_1'$ of all green marked edges in the interior of the optimal tour $T$ (red edges).} 
\label{fig:edge-partition}
\end{figure}

By definition and because of Proposition~\ref{prop:degenerate-case} and Lemma~\ref{lemma:nocrossing} we know that the edges in $S_3$ are at most as long as the edges in $T$. 
To bound the total length of all edges in $S_1$ in terms of the length of $T$ we proceed as follows: Fix some 
orientation of
the tour $S$. We may assume that $S_1$ contains at least two edges as otherwise by the triangle inequality the length of $T$ is an upper bound for the
length of the edges in $S_1$. 
Choose an edge $e_0=(x_0,y_0)$ from $S_1$ such that one of the two $x_0$-$y_0$-paths in $T$ does not contain in its interior the endpoints of any
other edge in $S_1$ (see Figure~\ref{fig:edge-partition}).  

Let $T_{[x_0,y_0]}$ be the $x_0$-$y_0$-path in $T$ that contains the endpoints of all other edges in $S_1$. The path $T_{[x_0,y_0]}$ is unique if 
we assume $|S_1|\ge 2$. 
Then we define the set $S_1'$ to contain all edges from $S_1$ that are ``compatible'' with $T_{[x_0,y_0]}$ in the following sense:
\[ S_1' := \{(a,b) \in S_1 : \mbox{ the } x_0\mbox{-}b\mbox{-}\mbox{path in } T_{[x_0,y_0]} \mbox{ contains } a\} . \]
In Figure~\ref{fig:edge-partition} for the chosen edge $e_0=(x_0,y_0)$ the edges in $S_1'$ are marked green.

All edges in $S_1$ that are oriented in the ``wrong'' way with respect to $T_{[x_0,y_0]}$ define the set $S_1''$, i.e., we have
$S_1'' := S_1 \setminus S_1'$. Similarly we can define sets $S_2'$ and $S_2''$ with respect to some edge $f\in S$ that lies in the
exterior of $T$. We want to prove that for each of the four sets $S_1', S_1'', S_2', S_2''$ we can bound the total length of all edges by 
 $O(\log n / \log \log n)$ times the length of $T$. To achieve this we will reduce the problem to a problem in 
weighted arborescences.

\section{Arborescences and Pairs of Tours}
\label{sec:proofidea}

In this section we will explain why bounding the length of a 2-optimal tour
reduces to some problem in weighted arborescences. We start by giving an informal description of the idea. 

Let $T$ be an optimal tour for a non-degenerate Euclidean TSP instance $V\subseteq \mathbb{R}^2$ and let $S$ be a 2-optimal tour such that $S$ and $T$ are crossing-free. 
Then $T$ together with the edge set $S_1'$ as defined in Section~\ref{sec:edge-partition} is a plane graph. 
Each region of this plane graph is bounded by edges in $E(T) \cup S_1'$. The boundary of each region is a cycle. Because of the
triangle inequality we can bound the length of each edge in a cycle by the sum of the lengths of all other edges in this cycle. 
This way we get a bound for the length of each edge in $S_1'$ which we call the \emph{combined triangle inequality} as
it arises by applying the triangle inequality to edges from both tours $T$ and $S$.

From the boundaries of the regions of the plane graph we can derive
another type of inequalities. Suppose some boundary $B$ contains at least two edges from $S_1'$. Then there will be two distinct
edges $e,f\in B\cap S_1'$ such that $e$ and $f$ are oriented in opposite direction along the boundary $B$. 
(Here we see the reason why we partitioned the edge set $S_1$ into the two subsets $S_1'$ and $S_1''$. 
In the plane graph arising from $T$ together with the whole edge set $S_1$ it may happen, that the boundary of a region contains 
at least two edges from $S_1$ and these edges are all oriented in the same direction along the boundary.)
If we remove $e$ and $f$ from
$B$ we get two paths (one of which may be empty) connecting the heads and tails of $e$ and $f$. Now by applying 
the triangle inequality to these two paths and using the 2-optimality condition~(\ref{eq:2-optimality}) for the edges $e$ and $f$ 
we get another inequality for the edges in $S_1'$. We call this inequality the \emph{combined 2-optimality condition} as
it arises by applying the 2-optimality condition in combination with the triangle inequality to edges from both tours $T$ and $S$.
   
In the following we want to apply the combined triangle inequality and the combined 2-optimality condition to neighboring regions 
of the plane
graph formed by the edge set $T\cup S_1'$. This part of the proof is independent of the embedding induced by the 
point coordinates in $\mathbb{R}^2$. In particular this part of the proof can also be applied to point sets in higher 
dimension by choosing an arbitrary planar embedding of the tour $T$. We therefore reduce in the following the problem of bounding 
the total length of all edges in $S_1'$ to a purely combinatorial problem in weighted arborescences. \bigskip

We now give a formal description of the reduction. Consider the plane graph obtained from $T$ together with the edge set $S_1'$. 
Let $H$ be the graph that is obtained from
the geometric dual of this plane graph by removing the vertex corresponding to the outer region. Then each edge in $H$ is a dual 
of an edge in 
$S_1'$. As each edge in $S_1'$ is a chord in the polygon $T$ we know that each edge in $H$ is a cut edge and thus $H$ is a tree. 
See Figure~\ref{fig:DualTree} for an example.

\begin{figure}[t]
\centering
\begin{tikzpicture}[scale=0.25, fill=gray]

\coordinate  (P1) at (32,  5);
\coordinate  (P2) at (18, 18);
\coordinate  (P3) at (27,  1);
\coordinate  (P4) at (15,  1);
\coordinate  (P5) at (23, 13);
\coordinate  (P6) at (40, 13);
\coordinate  (P7) at (34,  1);
\coordinate  (P8) at (15, 11);
\coordinate  (P9) at (11,  3);
\coordinate (P10) at (32, 11);
\coordinate (P11) at (35, 19);
\coordinate (P12) at ( 5, 10);
\coordinate (P13) at (21,  3);
\coordinate (P14) at (29, 19);
\coordinate (P15) at (25,  7);
\coordinate (P16) at (40,  7);
\coordinate (P17) at (36, 15);
\coordinate (P18) at (10, 16);
\coordinate (P19) at ( 1, 20);
\coordinate (P20) at (32, 16);
\coordinate (P21) at ( 9,  7);
\coordinate (P22) at (28, 14);
\coordinate (P23) at (36,  6);
\coordinate (P24) at ( 2,  5);
\coordinate (P25) at ( 7,  1);
\coordinate (P26) at (40,  1);
\coordinate (P27) at ( 1, 14);
\coordinate (P28) at ( 9, 20);
\coordinate (P29) at (20, 10);
\coordinate (P30) at (17,  6);
\coordinate (P31) at ( 6, 15);
\coordinate (P32) at (14, 15);
\coordinate (P33) at (22, 20);
\coordinate (P34) at (29,  8);
\coordinate (P35) at ( 1,  9);
\coordinate (P36) at (40, 20);
\coordinate (P37) at ( 9, 12);
\coordinate (P38) at (14, 20);
\coordinate (P39) at (37, 10);
\coordinate (P40) at ( 1,  1);
\coordinate (P41) at (19, 14);
\coordinate (P42) at ( 5, 19);

\filldraw[fill=black!10!white, draw=red, line width = 1.5] 
(P26) -- (P16) -- (P23) -- (P39) --  (P6) -- (P17) -- (P36) -- (P11) -- (P20) -- (P10) -- 
(P22) -- (P14) -- (P33) --  (P2) -- (P41) --  (P5) -- (P29) --  (P8) -- (P32) -- (P38) -- 
(P28) -- (P42) -- (P19) -- (P27) -- (P31) -- (P18) -- (P37) -- (P21) -- (P12) -- (P35) -- 
(P24) -- (P40) -- (P25) --  (P9) --  (P4) -- (P30) -- (P13) --  (P3) -- (P15) -- (P34) -- 
 (P1) --  (P7) -- cycle;

\foreach \i in {1,...,42}
  \node[circle, minimum size = 2.5mm, inner sep = 0mm] (N\i) at (P\i) {};

\node[circle, minimum size = 2.5mm, inner sep = 0mm, label=right:\hspace*{-1mm}$x_0$] (N26) at (P26) {};
\node[circle, minimum size = 2.5mm, inner sep = 0mm, label=left:$y_0$\hspace*{-1.6mm}]  (N23) at (P23) {};

\tikzstyle{arrow}=[Straight Barb[length=2mm]] 

\tikzstyle{sal}=[-{Straight Barb[length=1.5mm]}, color=blue!70!white, line width = 1.5] 

\draw[sal] (N26)  to node[below] {$e_0~$} (N23);
\draw[sal]  (N20) -- (N17); 
\draw[sal]  (N20) -- (N17);
\draw[sal]  (N17) -- (N39); 
\draw[sal]  (N34) -- (N22); 
\draw[sal]  (N15) -- (N29); 
\draw[sal]  (N30) --  (N8); 
\draw[sal]  (N41) -- (N33); 
\draw[sal]  (N18) -- (N28); 
\draw[sal]  (N25) -- (N21);

\coordinate  (D1) at ( 38.667,  4.667); 
\coordinate  (D2) at ( 33.600,  8.700); 
\coordinate  (D3) at ( 37.667, 12.667); 
\coordinate  (D4) at ( 35.750, 17.500); 
\coordinate  (D5) at ( 25.375, 13.125); 
\coordinate  (D6) at ( 19.667, 17.333); 
\coordinate  (D7) at ( 20.833,  6.333); 
\coordinate  (D8) at ( 12.300, 11.100); 
\coordinate  (D9) at (  5.333, 17.333); 
\coordinate (D10) at (  4.167,  5.500); 

\foreach \i in {1,...,10}
  \node[circle, minimum size = 3mm, inner sep = 0mm] (ND\i) at (D\i) {};
  
\tikzstyle{sadl}=[-{Straight Barb[length=1.5mm]}, color=green!80!black, line width = 2] 

\draw[sadl] (ND1) .. controls (35.0, 1.5) .. (ND2);
\draw[sadl] (ND2) .. controls (35.5, 12.5) .. (ND3);
\draw[sadl] (ND2) .. controls (34.0, 15.5) .. (ND4);
\draw[sadl] (ND2) -- (ND5);
\draw[sadl] (ND5) -- (ND6);
\draw[sadl] (ND5) -- (ND7);
\draw[sadl] (ND7) .. controls (16.0,  8.5) .. (ND8);
\draw[sadl] (ND8) .. controls (11.5, 18.0) .. (ND9);
\draw[sadl] (ND8) .. controls (10.0,  4.0) .. (ND10);

\foreach \i in {1,...,42}
  \fill[black] (P\i) circle (3mm);

\foreach \i in {1,...,10}
  \fill[color=black] (D\i) circle (4.5mm);

\end{tikzpicture}
\caption{The arborescence (green edges) in the dual of the plane graph formed by the edges of an optimal tour (red edges)
and the edges in the set $S_1'$ (blue edges) with respect to the edge $e_0=(x_0,y_0)$.} 
\label{fig:DualTree}
\end{figure}
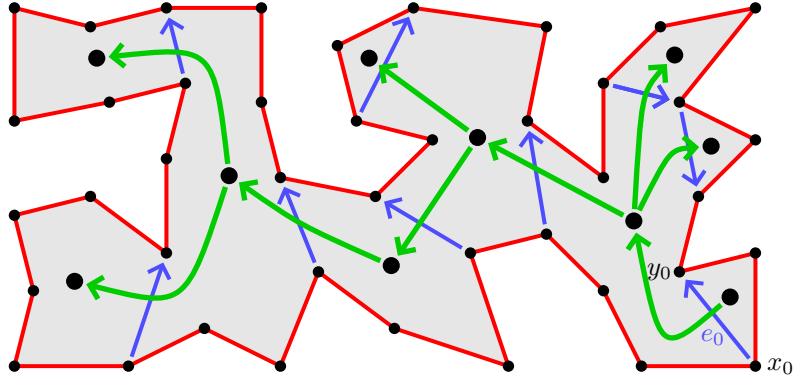

We now want to orient the edges of $H$ to get an arborescence.
An \emph{arborescence} $A=(V,E)$ is a connected directed acyclic graph that satisfies $|\delta^-(x)| \le 1$ for all $x\in V(A)$.
Each arborescence has exactly one \emph{root} $r$ which is the unique vertex $r\in V(A)$ with $\delta^-(r) = \emptyset$. 
For an arborescence $A$ with root $r$ we say that $A$ \emph{is rooted at} $r$.

The set $S_1'$ has been defined with respect to some edge $e_0=(x_0,y_0)$. 
The tree $H$ contains a vertex that corresponds to the region of the plane graph that is bounded by the edge $e_0$ and 
the edges in $E(T) \setminus  T_{[x_0,y_0]}$. By choosing this vertex as the root and orienting all edges in $H$ from the root to the leaves,
we get an arborescence $A$ from the tree $H$ (see Figure~\ref{fig:DualTree}).

We want to define two weight functions on the edge set $E(A)$ of the arborescence $A$ to capture the weights of the edges in $T$ and the
edges in $S_1'$. 
First we define the function $c:E(A) \to \mathbb{R}_{>0}$ to be the weight of the corresponding dual edge in $S_1'$.
Secondly, we define  a weight function $w:E(A) \to \mathbb{R}_{>0}$ as follows. Let $e=(x,y)$ be a directed edge in $E(A)$.
Let $Y$ be the region corresponding to the vertex $y$ in the plane graph formed by $E(T) \cup S'_1$. 
Then we define $w(e)$ to be the weight of all the edges in $E(T)$ that belong to the boundary of $Y$.

For the arborescence $A$ and the two weight functions $c$ and $w$ we can now state the above mentioned combined triangle inequality 
and the combined 2-optimality condition as follows:

\begin{lemma}\label{lemma:main-property-of-S1'}
Let $V\subseteq \mathbb{R}^2$ be a Euclidean TSP instance with distance function $\overline c: V\times V \to \mathbb{R}$ and
let $T$ be an optimal tour. Let $S$ be a 2-optimal tour such that $S$ and $T$ are crossing-free. 
Let $S_1'$ be defined as in Section~\ref{sec:edge-partition} with respect to some edge $e_0=(x_0,y_0)$. 
Let $A$ be the arborescence derived from the geometric dual of the plane graph $T\cup S_1'$ with 
weight functions $w:E(A) \to \mathbb{R}_{>0}$
and $c:E(A) \to \mathbb{R}_{>0}$ as defined above. Then 
we have
\begin{equation}
c(e) ~\le~ w(e) + \sum_{f\in \delta^+(y)} c(f) ~~~\mbox{for all $e= (x,y)\in E(A)$} .
\label{eq:combined-triangle-inequality}
\end{equation}
and 
\begin{equation}
c(x,y) + c(y,z) ~\le~ w(x,y) + \sum_{f\in \delta^+(y)\setminus\{(y,z)\}} c(f) \mbox{~~~~~ for all } (x,y), (y,z) \in E(A) 
\label{eq:combined-2-optimality}
\end{equation} 
\end{lemma}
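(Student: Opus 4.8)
The plan is to unwind the two weight functions $w$ and $c$ back into Euclidean distances and then apply exactly the two ingredients named informally in the text: the combined triangle inequality and the combined 2-optimality condition. Fix a directed edge $e=(x,y)\in E(A)$, and let $Y$ be the region of the plane graph $T\cup S_1'$ corresponding to $y$. The first step is to describe the boundary cycle of $Y$ precisely. Since every edge of $S_1'$ is a chord of the polygon $T$, the region $Y$ is bounded by exactly one sub-path $P_Y$ of $T$ (a contiguous $x_0$-$y_0$-arc portion of $T$) together with the dual edge of $e$ — call it $(p,q)\in S_1'$ with the chord $pq$ separating $Y$ from its parent — and the dual edges of the children of $y$ in $A$, i.e.\ the edges in $\delta^+(y)$, each contributing its own chord on the far side of $Y$. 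Thus, writing $c(P_Y)$ for the total length of the $T$-edges on the boundary of $Y$, we have $w(e)=c(P_Y)$ by definition of $w$, and for each $f\in\delta^+(y)$, $c(f)$ is the length of the chord that is the dual of $f$. The key geometric observation, which I would prove by an Euler/Jordan-curve argument on the crossing-free configuration, is that the chord $pq$ of $e$ together with the chords of the children and the path $P_Y$ form a single closed walk bounding $Y$ — i.e.\ the boundary is connected — so that $pq$, the children chords, and $P_Y$ together constitute a cycle.

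Given this, inequality~(\ref{eq:combined-triangle-inequality}) is immediate: $c(e)$ is the length of the chord $pq$, which is one edge of the boundary cycle of $Y$; by the triangle inequality (applied repeatedly, "shortcutting" the rest of the cycle down to the single segment $pq$) its length is at most the sum of the lengths of all the other boundary edges, namely $c(P_Y)+\sum_{f\in\delta^+(y)}c(f)=w(e)+\sum_{f\in\delta^+(y)}c(f)$. Here I must be a little careful that the "other edges" of the boundary walk are exactly the $T$-edges of $P_Y$ and the children chords, with correct multiplicity; this follows from the boundary-connectedness claim above together with the fact that $H$ is a tree (so each child chord appears exactly once on $\partial Y$).

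For inequality~(\ref{eq:combined-2-optimality}), fix $(x,y),(y,z)\in E(A)$; let $e=(x,y)$ have dual chord $pq$ and let $(y,z)$ have dual chord $p'q'$, both on the boundary of $Y$. Orient $pq$ and $p'q'$ consistently as edges of the 2-optimal tour $S$. Removing these two chords from the boundary cycle of $Y$ splits it into (at most) two paths $R_1,R_2$, each made up of $T$-edges from $P_Y$ and of child chords of $y$ other than $(y,z)$. Relabelling endpoints so that $R_1$ runs from the tail of $pq$ to the tail of $p'q'$ and $R_2$ from the head of $p'q'$ to the head of $pq$ — this is where the orientation of the edges matters, and where the split into $S_1'$ versus $S_1''$ pays off, since for edges in $S_1'$ the two tour edges $pq,p'q'$ on a common boundary necessarily point in "opposite" senses around $Y$ — the triangle inequality on $R_1$ and on $R_2$ gives $\overline c(\text{tail }pq,\text{tail }p'q')\le c(R_1)$ and $\overline c(\text{head }p'q',\text{head }pq)\le c(R_2)$. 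Adding these and applying the 2-optimality condition~(\ref{eq:2-optimality}) to the pair of $S$-edges $pq$ and $p'q'$ yields $c(p,q)+c(p',q')\le \overline c(\cdot,\cdot)+\overline c(\cdot,\cdot)\le c(R_1)+c(R_2)=c(P_Y)+\sum_{f\in\delta^+(y)\setminus\{(y,z)\}}c(f)$, which is exactly~(\ref{eq:combined-2-optimality}) after substituting $c(x,y)=c(p,q)$, $c(y,z)=c(p',q')$, and $w(x,y)=c(P_Y)$.

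The main obstacle is the combinatorial-geometric bookkeeping in the first paragraph: proving that the boundary of the region $Y$ is a single cycle consisting of precisely one $T$-arc $P_Y$, the parent chord of $e$, and the child chords $\delta^+(y)$, and — for the second inequality — that the two $S$-edges on $\partial Y$ have the opposing orientation that makes the endpoint-matching in the 2-optimality step come out right. This is exactly the point where the crossing-free hypothesis, the definition of $S_1'$ (compatibility with $T_{[x_0,y_0]}$), and the construction of the rooted arborescence $A$ from the dual tree $H$ all have to be used together; once that structural picture is nailed down, both displayed inequalities are one-line applications of the triangle inequality and~(\ref{eq:2-optimality}).
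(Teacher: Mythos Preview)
Your proposal is correct and follows essentially the same route as the paper's proof: identify the boundary of the region $Y$, apply the triangle inequality around it for~(\ref{eq:combined-triangle-inequality}), and for~(\ref{eq:combined-2-optimality}) remove the parent chord and one child chord, apply the triangle inequality to the two remaining boundary arcs, and combine with the 2-optimality condition~(\ref{eq:2-optimality}), using that the parent chord and each child chord are oriented oppositely along $\partial Y$ by the definition of $S_1'$.

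One small descriptive inaccuracy worth fixing: the $T$-portion of $\partial Y$ is in general \emph{not} a single contiguous sub-path $P_Y$ of $T$. When $y$ has several children, the child chords are interspersed with arcs of $T$ around the boundary, so the $T$-edges on $\partial Y$ form several disjoint arcs. This does not affect your argument, since you only ever use the total length $c(P_Y)=w(e)$ and the cycle decomposition $\partial Y = (\text{parent chord}) \cup (\text{child chords}) \cup (\text{$T$-edges})$; but the phrase ``exactly one sub-path $P_Y$ of $T$'' should be replaced by ``the collection of $T$-edges on $\partial Y$''. The paper handles the orientation check via the map $\phi$ recording position along $T_{[x_0,y_0]}$, which makes precise exactly the ``opposite senses around $Y$'' claim you flag as the main bookkeeping point.
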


\begin{proof}
Let $f=(a,b)$ be an edge in $S_1'$ and $f'=(a',b')$ its corresponding dual edge in $A$. By definition we have $c(f') = \overline c(f)$. 
The vertex $b'$ corresponds to a region $R$ in the plane graph on $V$ with edges $E(T) \cup S'_1$. 
By the triangle inequality the length $\overline c(f)$ of the edge $f$ is bounded by the length of all other edges in the boundary of the region $R$. 
Using the definitions of the functions $c$ and $w$ we therefore get: 
\[ c(f') ~=~ \overline c(f)  ~\le~ \sum_{g\in R\cap E(T)} \overline c(g) +  \sum_{g\in (R\cap S'_1) \setminus\{f\}} \overline c(g) 
         ~=~ w(f') + \sum_{g\in \delta^+(b')} c(g).\]
This proves condition~(\ref{eq:combined-triangle-inequality}).

We now prove property (\ref{eq:combined-2-optimality}). Let $f\in S'_1$. By definition of the set $S'_1$ we know that $S'_1$ is defined with 
respect to an edge $e_0=(x_0,y_0) \in S_1$ and the $x_0$-$y_0$-path $T_{[x_0,y_0]}$ contains the endpoints of all other edges in $S_1$. 
Let $\phi : V(T_{[x_0,y_0]}) \to\mathbb{N}$ such that $\phi(z)$ for $z \in  V(T_{[x_0,y_0]})$ denotes the distance 
(in terms of the number of edges) between $x_0$ and $z$ in $T_{[x_0,y_0]}$. The definition of the set $S'_1$ implies that $\phi(a) < \phi(b)$
for each edge $(a,b) \in S'_1$. Each edge in $S'_1$ can be seen as a shortcut for the path $T_{[x_0,y_0]}$. 
For an edge $f=(a,b)\in S'_1$ with dual edge $f'=(a',b') \in E(A)$ we denote by $\left(\delta^+(b')\right)'$
all edges dual to the edges in $\delta^+(b')$. The edge $f=(a,b)$ and the edges in $\left(\delta^+(b')\right)'$
belong to the border
of a region of the graph on $V$ with edge set $E(T) \cup S'_1$. Along this border the edge $f=(a,b)$ is directed opposite to all edges 
in $\left(\delta^+(b')\right)'$. Therefore, the triangle inequality together with the 2-optimality condition~(\ref{eq:2-optimality})
for the set $S'_1$ imply for each edge $(u,v)\in \left(\delta^+(b')\right)'$:
\[ \overline c(a,b) + \overline c (u,v) ~\le~ w(a',b') + \sum_{g\in \left(\delta^+(b')\right)' \setminus \{(u,v)\}} \overline c(g) \]
We have $\overline c(a,b) = c(a',b')$ and $\overline c(u,v) = c(b', x')$ for the vertex $x'\in V(A)$ such that $(u,v)\in S_1$ 
is the dual edge to $(b', x')\in E(A)$. 
Therefore we get:
\[ c(a',b') + c (b',x') ~\le~ w(a',b') + \sum_{g\in \delta^+(b') \setminus \{(b',x')\}} c(g) \]
\end{proof}

We call condition~(\ref{eq:combined-triangle-inequality}) the \emph{combined triangle inequality} and 
condition~(\ref{eq:combined-2-optimality}) the \emph{combined 2-optimality condition}.
Note that these two conditions can be formulated for any arborescence $A$ with weight functions $c$ and $w$. 
In the next section we will show that if these two conditions are satisfied for an arborescence $A$ then we can bound  $c(A)/w(A)$
by $O(\log(|E(A)|) / \log \log(|E(A)|))$.

\section{The Arborescence Lemmas}
\label{sec:arborescence-lemmas}

Let $A$ be an arborescence with weight functions $w:E(A) \to \mathbb{R}_{>0}$
and $c:E(A) \to \mathbb{R}_{>0}$. We will first show that 
the combined triangle inequality~(\ref{eq:combined-triangle-inequality}) and the 
combined 2-optimality condition~(\ref{eq:combined-2-optimality}) imply two additional properties.
For this we need the following definition. For an arborescence $A=(V,E)$ and an edge $e = (x,y)\in E(A)$ we denote by
$A_e$ the sub-arborescence rooted at $x$ that contains the edge $e$ and all descendants of $y$, see Figure~\ref{fig:sub-arborescence} for an example.

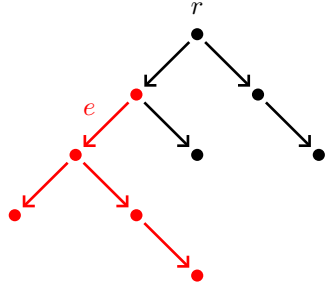
\begin{figure}[t]
\centering
\begin{tikzpicture}[scale=0.8]
\tikzstyle{vertex}=[black,circle,minimum size=8,inner sep=0]
\tikzstyle{arrow}=[-{Straight Barb[length=1.0mm]}]

\node[vertex, label = above:$r$]      (A)  at ( 3  , 4) {};
\node[vertex, red] (B)  at ( 2  , 3) {};
\node[vertex]      (C)  at ( 4  , 3) {};
\node[vertex, red] (D)  at ( 1  , 2) {};
\node[vertex]      (E)  at ( 3  , 2) {};
\node[vertex]      (F)  at ( 5  , 2) {};
\node[vertex, red] (G)  at ( 0  , 1) {};
\node[vertex, red] (H)  at ( 2  , 1) {};
\node[vertex, red] (I)  at ( 3  , 0) {};

\fill (A) circle (1mm);
\fill[red] (B) circle (1mm);
\fill (C) circle (1mm);
\fill[red] (D) circle (1mm);
\fill (E) circle (1mm);
\fill (F) circle (1mm);
\fill[red] (G) circle (1mm);
\fill[red] (H) circle (1mm);
\fill[red] (I) circle (1mm);

 \draw[arrow,  line width=1] (A) to (B);
 \draw[arrow,  line width=1] (A) to (C);
 \draw[arrow,  line width=1] (B) to (E);
 \draw[arrow,  line width=1] (C) to (F);
 
 \draw[arrow, red, line width=1] (D) to (G);
 \draw[arrow, red, line width=1] (D) to (H);
 \draw[arrow, red, line width=1] (H) to (I);
 \draw[arrow, red, line width=1] (B) to node[above left] {$e$} (D);
\end{tikzpicture}
\caption{An arborescence $A$ with root $r$. Shown in red is the sub-arborescence $A_e$ defined by the edge $e$.}
\label{fig:sub-arborescence}
\end{figure}

\begin{lemma}\label{lemma:weight-bound}
Let $A$ be an arborescence with weight functions $w:E(A) \to \mathbb{R}_{>0}$
and $c:E(A) \to \mathbb{R}_{>0}$ that satisfies  
the combined triangle inequality~(\ref{eq:combined-triangle-inequality}). 
Then we have 
\begin{equation}
c(e) ~\le~ w(A_e) ~~~\mbox{for all $e\in E(A)$}
\label{eq:weight-bound}
\end{equation}
\end{lemma}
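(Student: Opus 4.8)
The plan is to prove the inequality $c(e) \le w(A_e)$ by induction on the size of the sub-arborescence $A_e$, working from the leaves of $A$ upward. Let $e = (x,y) \in E(A)$. The base case is when $y$ is a leaf of $A$, i.e.\ $\delta^+(y) = \emptyset$; then the combined triangle inequality~(\ref{eq:combined-triangle-inequality}) degenerates to $c(e) \le w(e)$, and since $A_e$ consists of the single edge $e$ in this case, we have $w(A_e) = w(e) \ge c(e)$, as desired.

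For the inductive step, assume $\delta^+(y) = \{(y,z_1),\dots,(y,z_k)\}$ is nonempty, and assume the statement holds for every edge whose sub-arborescence is strictly smaller than $A_e$ — in particular for each edge $f_i := (y,z_i) \in \delta^+(y)$, since $A_{f_i}$ is a proper sub-arborescence of $A_e$. Starting from the combined triangle inequality for $e$,
\begin{equation*}
c(e) ~\le~ w(e) + \sum_{i=1}^{k} c(f_i),
\end{equation*}
I would apply the induction hypothesis $c(f_i) \le w(A_{f_i})$ to each term in the sum, obtaining
\begin{equation*}
c(e) ~\le~ w(e) + \sum_{i=1}^{k} w\big(A_{f_i}\big).
\end{equation*}
It then remains to observe that the edge sets $E(A_{f_i})$ for distinct $i$ are pairwise disjoint (the sub-arborescences $A_{f_i}$ hang off distinct children $z_i$ of $y$, so they share no edges, and none of them contains $e$), and that together with $e$ they exactly account for all of $E(A_e)$: indeed $E(A_e) = \{e\} \cup \bigcup_{i=1}^k E(A_{f_i})$. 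Since $w$ is a sum of nonnegative (in fact positive) edge weights, $w$ is additive over this disjoint union, so $w(e) + \sum_{i=1}^{k} w(A_{f_i}) = w(A_e)$, which closes the induction.

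The only real subtlety — and the step I would be most careful about — is the bookkeeping of which edges belong to $A_e$ versus $A_{f_i}$, i.e.\ getting the decomposition $E(A_e) = \{e\} \sqcup \bigsqcup_i E(A_{f_i})$ exactly right from the definition of $A_e$ (the sub-arborescence rooted at $x$ consisting of $e$ together with all descendants of $y$). Once this disjoint-union identity is pinned down, additivity of $w$ and the telescoping of the induction hypothesis through the combined triangle inequality make the rest routine; note that the combined 2-optimality condition~(\ref{eq:combined-2-optimality}) is not needed for this particular lemma, only~(\ref{eq:combined-triangle-inequality}).
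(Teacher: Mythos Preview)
Your proof is correct and follows essentially the same approach as the paper's: both argue by induction (you on the size of $A_e$, the paper on its height), handle the leaf base case via~(\ref{eq:combined-triangle-inequality}) collapsing to $c(e)\le w(e)$, and in the inductive step chain~(\ref{eq:combined-triangle-inequality}) with the induction hypothesis and the decomposition $E(A_e)=\{e\}\cup\bigcup_{f\in\delta^+(y)}E(A_f)$. Your write-up is slightly more explicit about the disjoint-union bookkeeping, but the argument is the same.
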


\begin{proof}
This follows by induction on the height of the sub-arborescence $A_e$.
If $e=(x,y)$ is an edge in $E(A)$ such that $y$ is a leaf then the combined triangle inequality~(\ref{eq:combined-triangle-inequality})
implies $c(e) \le w(e) = w(A_e)$. 
For an arbitrary edge $e=(x,y)\in E(A)$ we get by induction: 
$$ c(e) ~\le~ w(e) + \sum_{f\in \delta^+(y)} c(f) ~\le~ w(e) + \sum_{f\in \delta^+(y)} w(A_f) ~=~ w(A_e) .$$
\end{proof}

In the following lemma we have a statement about the maximum weight of a possibly empty edge set. As usual we assume 
$\max \emptyset = -\infty$.

\begin{lemma}\label{lemma:weight-bound-2}
Let $A$ be an arborescence with weight functions $w:E(A) \to \mathbb{R}_{>0}$
and $c:E(A) \to \mathbb{R}_{>0}$ that satisfies  
the combined 2-optimality condition~(\ref{eq:combined-2-optimality}). Then we have:
\begin{equation}
2 \cdot \max_{f\in \delta^+(y)} c(f) ~\le~ w(x,y) -c(x,y) + \sum_{g\in \delta^+(y)} c(g) ~~~\mbox{for each edge } (x,y) \in E(A)
\label{eq:weight-bound-2}
\end{equation} 
\end{lemma}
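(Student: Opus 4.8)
The plan is to fix an arbitrary edge $(x,y)\in E(A)$ and reduce the claimed inequality~(\ref{eq:weight-bound-2}) to a single application of the combined 2-optimality condition~(\ref{eq:combined-2-optimality}). First I would dispose of the degenerate case $\delta^+(y)=\emptyset$: then the left-hand side of~(\ref{eq:weight-bound-2}) is $2\cdot\max\emptyset = -\infty$ by the stated convention, while the right-hand side equals $w(x,y)-c(x,y)$, a real number, so the inequality holds trivially.

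For the main case, assume $\delta^+(y)\neq\emptyset$. Since $\delta^+(y)$ is a nonempty finite set, I can choose an edge $(y,z)\in\delta^+(y)$ that attains the maximum, i.e.\ $c(y,z)=\max_{f\in\delta^+(y)} c(f)$. Both $(x,y)$ and $(y,z)$ are edges of $A$, so~(\ref{eq:combined-2-optimality}) applies to this pair and gives
\[
 c(x,y) + c(y,z) ~\le~ w(x,y) + \sum_{f\in\delta^+(y)\setminus\{(y,z)\}} c(f).
\]
Adding $c(y,z)$ to both sides turns the punctured sum on the right into the full sum $\sum_{f\in\delta^+(y)} c(f)$, and after moving $c(x,y)$ to the right-hand side one obtains $2\,c(y,z) \le w(x,y) - c(x,y) + \sum_{g\in\delta^+(y)} c(g)$. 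Since $c(y,z)=\max_{f\in\delta^+(y)} c(f)$, this is precisely~(\ref{eq:weight-bound-2}).

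There is essentially no obstacle here: the whole argument is a one-line manipulation of the hypothesis once the right witness edge $(y,z)$ — the one maximizing $c$ over $\delta^+(y)$ — is selected, the only minor points of care being the empty-out-neighborhood convention and the observation that re-adding the maximal term $c(y,z)$ converts $\sum_{f\in\delta^+(y)\setminus\{(y,z)\}} c(f)$ into the complete sum $\sum_{f\in\delta^+(y)} c(f)$.
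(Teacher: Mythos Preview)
Your proof is correct and follows essentially the same approach as the paper: apply the combined 2-optimality condition~(\ref{eq:combined-2-optimality}) to a pair $(x,y),(y,z)$, rearrange to isolate $2\,c(y,z)$ on the left, and then observe that the right-hand side is independent of the choice of $(y,z)$ so one may take the maximizing out-edge. Your explicit handling of the case $\delta^+(y)=\emptyset$ via the $\max\emptyset=-\infty$ convention is a small addition the paper leaves implicit.
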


\begin{proof} 
From the combined 2-optimality condition~(\ref{eq:combined-2-optimality}) we get
$$2 \cdot c(y,z) ~\le~ w(x,y) - c(x,y) + \sum_{f\in \delta^+(y)} c(f) \mbox{~~~~~ for all } (x,y), (y,z) \in E(A) . $$
As the right hand side of this inequality is independent of the edge $(y,z)$ we can therefore replace $c(y,z)$ by the term
$\max_{f\in \delta^+(y)} c(f)$ on the left hand side.
\end{proof}

Our next goal is to bound $c(A)$ in terms of $w(A)$ for arborescences $A$ satisfying~(\ref{eq:combined-triangle-inequality}) 
and~(\ref{eq:combined-2-optimality}). The following two lemmas prove such a statement for certain subsets of $E(A)$.

\begin{lemma}
Let $A=(V,E)$ be an arborescence with weight functions $w:E(A)\to \mathbb{R}_{> 0}$ and $c:E(A)\to \mathbb{R}_{> 0}$
that satisfies the combined 2-optimality condition~(\ref{eq:combined-2-optimality}).
 For some fixed number $k \in \mathbb{R}_{>0}$ define 
$E' := \{ (x,y) \in E(A): \max_{f\in \delta^+(y)} c(f) > \frac{1}{k} \cdot c(x,y) \}$.
Then we have: \[ c(E') ~\le~ \frac{k}{2} \cdot w(A) .\]
\label{lemma:E'}
\end{lemma}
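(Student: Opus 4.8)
The plan is to turn the hypothesis into a per-edge inequality and then control the resulting sum by an amortization over the arborescence.

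First I would fix an edge $(x,y)\in E'$ and combine its defining property with Lemma~\ref{lemma:weight-bound-2}. The definition of $E'$ says $\max_{f\in\delta^+(y)}c(f)>\frac{1}{k}\,c(x,y)$, while Lemma~\ref{lemma:weight-bound-2} gives $2\max_{f\in\delta^+(y)}c(f)\le w(x,y)-c(x,y)+\sum_{g\in\delta^+(y)}c(g)$; combining the two yields
\[ \Bigl(1+\frac{2}{k}\Bigr)c(x,y)\;<\;w(x,y)+\sum_{g\in\delta^+(y)}c(g)\qquad\text{for every }(x,y)\in E'. \]
Summing over all $(x,y)\in E'$ and using $\sum_{(x,y)\in E'}w(x,y)=w(E')\le w(A)$ gives $\bigl(1+\frac{2}{k}\bigr)c(E')<w(A)+\Sigma$, where $\Sigma:=\sum_{(x,y)\in E'}\sum_{g\in\delta^+(y)}c(g)$. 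Since every vertex of an arborescence has a unique incoming edge, $\Sigma$ is precisely the total $c$-weight of the edges of $A$ whose tail's incoming edge lies in $E'$, each counted once. Hence it would suffice to prove $\Sigma\le c(E')$, for then $\bigl(1+\frac{2}{k}\bigr)c(E')<w(A)+c(E')$ gives the claimed $c(E')<\frac{k}{2}\,w(A)$.

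The inequality $\Sigma\le c(E')$ is the crux, and it is false pointwise — an $E'$-edge may have many heavy children — so instead I would prove the lemma by induction on $|E(A)|$. At each step I would peel off a deepest leaf-cluster, i.e., an internal vertex $y$ of maximum depth, all of whose children $z_1,\dots,z_m$ are leaves, deleting $z_1,\dots,z_m$ while absorbing $\sum_i c(y,z_i)$ into the $w$-weight of the edge entering $y$; one checks that this operation preserves both the combined triangle inequality~(\ref{eq:combined-triangle-inequality}) and the combined 2-optimality condition~(\ref{eq:combined-2-optimality}), so the inductive hypothesis applies to the resulting smaller arborescence. The two facts driving the step are the combined triangle inequality in its degenerate form $c(e)\le w(e)$ for an edge $e$ into a leaf — so that the $w$-weight freed by deleting a leaf-cluster dominates its $c$-weight — together with the defining inequality of $E'$, which, whenever the edge entering $y$ newly joins $E'$, bounds its $c$-weight in terms of $\max_i c(y,z_i)$ and hence of the deleted weight. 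The main obstacle is quantitative: a careless form of this amortization loses a factor of two and only gives $c(E')\le k\,w(A)$, so one has to carry the right auxiliary term through the induction in order to land on the constant $\frac{k}{2}$; getting that constant sharp is where essentially all of the work lies.
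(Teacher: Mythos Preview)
Your first display is exactly the per-edge inequality the paper derives, and summing over $E'$ is the right move. The gap appears immediately after: by transferring $c(x,y)$ to the left-hand side you destroy the structure that makes the sum collapse, and you are forced towards the (false) subclaim $\Sigma\le c(E')$. The paper instead keeps the inequality in the form
\[
\frac{2}{k}\,c(x,y)\;<\;w(x,y)-c(x,y)+\sum_{g\in\delta^+(y)}c(g)\qquad\text{for }(x,y)\in E',
\]
observes via Lemma~\ref{lemma:weight-bound-2} that the right-hand side is nonnegative for \emph{every} edge of $A$, and therefore enlarges the summation from $E'$ to all of $E(A)$. Over the full edge set the right-hand side telescopes: each non-root edge appears once with sign $+1$ (as a member of some $\delta^+(y)$) and once with sign $-1$ (as the term $-c(x,y)$), while the edges out of the root occur only with sign $-1$; hence
\[
\sum_{(x,y)\in E(A)}\Bigl(w(x,y)-c(x,y)+\sum_{g\in\delta^+(y)}c(g)\Bigr)\;\le\;w(A),
\]
and $\frac{2}{k}\,c(E')\le w(A)$ follows in one line. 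No induction, no amortization.

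Your fallback argument has two further problems. First, it explicitly invokes the combined triangle inequality~(\ref{eq:combined-triangle-inequality}) (``its degenerate form $c(e)\le w(e)$ for an edge into a leaf''), which is not among the hypotheses of this lemma; only~(\ref{eq:combined-2-optimality}) is assumed. Second, and more seriously, the argument is not actually carried out: you yourself concede that the naive version loses a factor of~$2$ and that ``getting that constant sharp is where essentially all of the work lies'', but that work is absent. The paper's extend-and-telescope trick is precisely the missing idea that delivers the sharp constant $\frac{k}{2}$ for free.
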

\begin{proof}
By definition of $E'$ and using Lemma~\ref{lemma:weight-bound-2} we have for each edge $(x,y)\in E'$:
\[ \frac{2}{k}\cdot c(x,y) ~<~ 2\cdot \max_{f\in \delta^+(y)}c(f) \le w(x,y) - c(x,y) + \sum_{g\in \delta^+(y)} c(g) .\]
Adding this inequality for all $(x,y)\in E'$ and using that the left hand side and therefore also the right hand side of 
inequality~(\ref{eq:weight-bound-2}) is non-negative
we get:
\begin{eqnarray*}
\frac{2}{k}\cdot \sum_{(x,y)\in E'} c(x,y) & < &  \sum_{(x,y)\in E'}\left( w(x,y) - c(x,y) + \sum_{g\in \delta^+(y)} c(g)\right) \\
                                           &\le&  \sum_{(x,y)\in E(A)}\left( w(x,y) - c(x,y) + \sum_{g\in \delta^+(y)} c(g)\right) \\
                                           & \le & w(A)
\end{eqnarray*} 
\end{proof}

For a fixed number $k\in\mathbb{R}_{>0}$ and a number $r\in \mathbb{R}_{\ge 0}$ we define the edge set $E_r \subseteq E(A)$ as follows:
\begin{equation}
E_r ~:=~ \left\{ e=(x,y)\in E(A): r < c(e) \le \frac{k}{4} \cdot r \mbox{ and } c(f) \le \frac{1}{k}\cdot c(e) ~\forall f\in \delta^+(y) \right\}
\label{def:Er}
\end{equation}

\begin{lemma}\label{lemma:arborescence}
Let $A=(V,E)$ be an arborescence with weight functions $w:E(A)\to \mathbb{R}_{> 0}$ and $c:E(A)\to \mathbb{R}_{> 0}$
that satisfies the combined triangle inequality~(\ref{eq:combined-triangle-inequality})
and the combined 2-optimality condition~(\ref{eq:combined-2-optimality}).
Let $E_r$ be defined as in~(\ref{def:Er}).
Then we have: \[c(E_r) ~\le~ 2\cdot w(A) .\]  
\end{lemma}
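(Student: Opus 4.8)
The plan is to exploit the definition of $E_r$ to show that the edges in $E_r$ are ``spread out'' in the arborescence $A$ in a controlled way, and then charge their $c$-weight against $w(A)$ using Lemma~\ref{lemma:weight-bound}. The key observation is that every edge $e = (x,y) \in E_r$ satisfies $c(e) > r$, so by Lemma~\ref{lemma:weight-bound} we have $w(A_e) \ge c(e) > r$; this gives each such edge a ``budget'' of $w$-weight at least $r$ hidden in its sub-arborescence $A_e$. If the sub-arborescences $A_e$ for $e \in E_r$ were pairwise disjoint, we could immediately conclude $c(E_r) \le \frac{k}{4} r \cdot |E_r|$ and $|E_r| \cdot r \le w(A)$, but these sub-arborescences need not be disjoint, since one edge of $E_r$ may lie on the path from the root to another. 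So the real work is to understand the nesting structure.

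First I would fix $e = (x,y) \in E_r$ and consider the next edges of $E_r$ strictly below $e$ — more precisely, for each vertex $z$ that is a descendant of $y$, look at the first edge of $E_r$ encountered on the directed path from $y$ to and below $z$. Call these the \emph{children of $e$ in $E_r$}. I claim the sub-arborescences rooted at these children are pairwise disjoint and all contained in $A_e$ minus the edge $e$ itself, so that $\sum_{e' \text{ child of } e} w(A_{e'}) \le w(A_e) - w(e)$. The crucial point is the condition $c(f) \le \frac{1}{k} c(e)$ for all $f \in \delta^+(y)$: this ``small successor'' condition, when applied repeatedly (and combined with the fact that an edge $e'$ of $E_r$ must have $c(e') > r \ge \frac{4}{k} c(e)$ if $c(e) \le \frac{k}{4} r$... — here I need to be careful about the direction of the inequalities), should force that no edge of $E_r$ can appear ``too close'' below another edge of $E_r$. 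I expect the correct statement is that along any root-to-leaf path the edges of $E_r$ have $c$-values that must jump by a factor depending on $k$, which prevents the same $w$-weight from being double-counted more than a bounded number of times — and the bound $2 \cdot w(A)$ suggests the overcounting factor is essentially $2$, coming from geometric-series reasoning like $\sum_{i \ge 0} (k/4)/(k/... )$, actually $\frac{k/4 \cdot r}{r} \cdot \frac{1}{k/4 - 1}$-type telescoping, or more likely: for a child $e'$ of $e$ we get $c(e') \le \frac{k}{4} \cdot c(e') $'s own $r$, and the $r$-values telescope.

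Concretely, the cleanest route is probably a two-step charging argument. Step one: show that for $e \in E_r$, the edges of $E_r$ that are descendants-or-equal satisfy a packing bound inside $A_e$: writing $D(e)$ for the set of $e' \in E_r$ with $e' \in E(A_e)$, one proves $\sum_{e' \in D(e)} r < \sum_{e' \in D(e)} c(e') \le 2 \, w(A_e)$ by induction on the height, using Lemma~\ref{lemma:weight-bound} for the base term $c(e) \le w(A_e)$ and the disjointness of the children's sub-arborescences together with the factor-$\frac{1}{k}$ shrinking of successor $c$-values to absorb the cross terms. Step two: apply this with $e$ ranging over the maximal edges of $E_r$ (those with no $E_r$-edge strictly above them), whose sub-arborescences $A_e$ are disjoint, to get $c(E_r) \le \sum_{e \text{ maximal}} 2\, w(A_e) \le 2 \, w(A)$.

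The main obstacle will be Step one: pinning down exactly how the ``small successor'' condition $c(f) \le \frac{1}{k} c(e)$ interacts with the lower bound $c(e') > r$ for $e' \in E_r$ to control the geometric decay of $c$-values down each path, and verifying that the cross terms in the induction really do close up with the constant $2$ rather than something larger. In particular I expect one must use that the successor-$c$-bound is imposed at the \emph{head} $y$ of every edge of $E_r$, so that after passing through one edge of $E_r$ the $c$-weight of the next edges drops by a factor $k$, while an edge can only re-enter $E_r$ once its $c$-weight has grown back above $r$ — and since $r < c(e) \le \frac{k}{4} r$, these two constraints together leave only a bounded multiplicative window, which is what makes the nesting depth effectively harmless and the constant come out as $2$.
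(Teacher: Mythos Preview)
Your two-step plan is exactly the paper's approach: an inductive packing bound $c(E(A_e)\cap E_r)\le 2\,w(A_e)$ for each $e\in E_r$, followed by summing over the maximal $E_r$-edges whose sub-arborescences are disjoint. The one point worth flagging is that the direct inductive statement $c(D(e))\le 2\,w(A_e)$ does not quite close on its own; the paper instead proves the strengthened hypothesis $w(A_e)\ge c(e)+\sum_{f\in D(e)\setminus\{e\}}\bigl(c(f)-\tfrac{r}{4}\bigr)$, where the $-\tfrac{r}{4}$ slack (coming precisely from your observation that successors of an $E_r$-edge have $c\le \tfrac{1}{k}\cdot\tfrac{k}{4}r=\tfrac{r}{4}$) is what absorbs the cross terms, and then recovers the factor $2$ at the end via $c(f)-\tfrac{r}{4}\ge \tfrac12 c(f)$.
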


\begin{proof}
Let $e = (x,y) \in E_r$. We first prove by induction on the cardinality of $E(A_e) \cap E_r$: 
\begin{equation}
w(A_e) ~\ge ~ c(e) + \sum_{f\in \left(E(A_e) \cap E_r\right) \setminus \{e\}} \left ( c(f) - \frac{r}{4}\right)
\label{eq:induction}
\end{equation}
 
If $|E(A_e) \cap E_r| = 1$ then $E(A_e) \cap E_r = \{e\}$ and therefore $(E(A_e) \cap E_r) \setminus \{e\} = \emptyset$.
Inequality~(\ref{eq:induction}) then states $w(A_e) \ge  c(e)$ which holds because of Lemma~\ref{lemma:weight-bound}.

Now assume that $|E(A_e) \cap E_r| > 1$ and that inequality~(\ref{eq:induction}) holds for all edges $f \in E_r$ with $|E(A_f) \cap E_r| < |E(A_e) \cap E_r|$. 
From the definition of the set $E_r$ we get for each edge $f\in \delta^+(y)$: 
\begin{equation}
c(f) ~\le ~ \frac{1}{k}\cdot c(e)  ~\le~ \frac{1}{k}\cdot \frac{k}{4} \cdot r ~=~  \frac{r}{4}
\label{eq:r/4-bound}
\end{equation}
We define the following two sets of edges: 
\begin{equation*}
X ~ := ~ \{f\in \delta^+(y): E(A_f) \cap E_r = \emptyset\}
\end{equation*}
and
\begin{equation*}
F ~ := ~ \{f\in (E_r\cap E(A_e)) \setminus \{e\}: \mbox{ no edge } h\in E_r \mbox{ lies on a path from $f$ to $e$ in $A$} \}
\end{equation*}
For each edge $f\in \delta^+(y)$ we either have $E(A_f) \cap E_r = \emptyset$ or $E(A_f) \cap E_r \not= \emptyset$. In the first case the edge $f$ belongs to 
the set $X$.
In the second case at least one edge from $A_f$ belongs to $F$. Thus we have
\begin{equation}
|F| + |X| ~\ge~ |\delta^+(y)| ~~~\Rightarrow ~~~ |\delta^+(y) \setminus X| ~\le~ |F|. 
\label{eq:setsizes}
\end{equation}
 
By the induction hypothesis, inequality~(\ref{eq:induction}) holds for each edge $f\in F$ and we can now prove inequality~(\ref{eq:induction}) for the edge~$e$:
\begin{eqnarray*}
w(A_e) & \ge                                            & \sum_{f\in F} w(A_f) + \sum_{f\in X} w(A_f) + w(e) \\
       & \stackrel{(\text{\ref{eq:weight-bound}})}{\ge}  & \sum_{f\in F} w(A_f) + \sum_{f\in X} c(f) + w(e) \\
       & \stackrel{(\text{\ref{eq:combined-triangle-inequality}})}{\ge} & \sum_{f\in F} w(A_f) + c(e) - \sum_{f\in \delta^+(y)\setminus X} c(f) \\
       & \stackrel{(\text{\ref{eq:r/4-bound}})}{\ge}    & \sum_{f\in F} w(A_f) + c(e) - \sum_{f\in \delta^+(y)\setminus X} \frac{r}{4} \\
       & \stackrel{(\text{\ref{eq:setsizes}})}{\ge}     & \sum_{f\in F} \left( w(A_f) - \frac{r}{4} \right)  + c(e)  \\
       & \stackrel{(\text{\ref{eq:induction}})}{\ge}    & \sum_{f\in F} \left( c(f) + \sum_{g\in (E(A_f)\cap E_r)\setminus\{f\}} \left( c(g) -\frac{r}{4} \right) - \frac{r}{4} \right)  + c(e)  \\
       &                           =                    & c(e) + \sum_{f\in (E(A_e) \cap E_r) \setminus \{e\}} \left ( c(f) - \frac{r}{4}\right)  
\end{eqnarray*} 
The last equality holds because each edge in  $(E(A_e) \cap E_r) \setminus \{e\}$ appears exactly once in the sets 
$(E(A_f)\cap E_r)$ for $f\in F$.

By definition of the set $E_r$ we have for each edge $f\in E_r$: 
\begin{equation}
c(f) ~\ge~ r ~~~\Rightarrow~~~ \frac{1}{2} \cdot c(f) ~\ge~ \frac{r}{4} ~~~\Rightarrow~~~ c(f) - \frac{r}{4} ~\ge~ \frac{1}{2} \cdot c(f)
\label{eq:Er-edge-cost-lower-bound}
\end{equation} 
 
Inequality~(\ref{eq:induction}) therefore implies 
\begin{eqnarray*}
w(A_e) & \ge   & c(e) + \sum_{f\in (E(A_e) \cap E_r) \setminus \{e\}} \left ( c(f) - \frac{r}{4}\right)\\
       & \ge   & \sum_{f\in E(A_e) \cap E_r} \left ( c(f) - \frac{r}{4}\right) \\
       & \stackrel{(\text{\ref{eq:Er-edge-cost-lower-bound}})}{\ge}   & \sum_{f\in E(A_e) \cap  E_r} \left ( \frac{1}{2}  \cdot c(f)\right)\\
       & =     & \frac{1}{2} \cdot c(E(A_e) \cap E_r)  
\end{eqnarray*} 

Now choose a minimal set of edges $e_1, e_2, \ldots \in E_r$ such that $E_r \subseteq \bigcup_{i} E(A_{e_i})$.
Then 
\[ w(A) ~\ge~ w(\bigcup_{i} E(A_{e_i})) ~=~ \sum_{i} w(E(A_{e_i})) ~\ge~ \frac{1}{2} \cdot \sum_{i} c(E(A_{e_i}) \cap E_r)  ~=~ \frac{1}{2} \cdot c(E_r)\]
\end{proof}

\begin{lemma}
Let $A=(V,E)$ be an arborescence with weight functions $w:E(A)\to \mathbb{R}_{> 0}$ and $c:E(A)\to \mathbb{R}_{> 0}$
that satisfies the combined triangle inequality~(\ref{eq:combined-triangle-inequality})
and the combined 2-optimality condition~(\ref{eq:combined-2-optimality}).
Moreover we assume that $c(A) \ge 18 \cdot w(A)$.
Then we have: \[c(A) ~\le~ 12\cdot \frac{\log(|E(A)|)}{\log \log (|E(A)|)}\cdot w(A) .\]
\label{lemma:mainlemma}
\end{lemma}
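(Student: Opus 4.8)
The plan is to combine the four preceding lemmas by \emph{bucketing} the edges of $A$ according to their $c$-weight. Write $n_A:=|E(A)|$ and fix a parameter $k>4$, a function of $n_A$ to be chosen at the end. First I would record a crude a priori bound: summing the inequality $c(e)\le w(A_e)$ of Lemma~\ref{lemma:weight-bound} over all $e\in E(A)$ and counting, for each edge $g$, in how many sets $A_e$ it occurs (namely $g$ itself together with its ancestor edges, at most $n_A$ of them), yields $c(A)\le n_A\cdot w(A)$. Together with the hypothesis $c(A)\ge 18\,w(A)$ this forces $n_A\ge 18$; in particular $\log\log n_A>0$ and $\tfrac{\log n_A}{\log\log n_A}\ge\tfrac12$. (This is the sole purpose of the hypothesis: it removes the degenerate small cases $n_A\in\{0,1,2\}$ where the claimed bound is vacuous or ill-defined.)

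Next I partition $E(A)$ into three parts. Let $E':=\{(x,y)\in E(A):\max_{f\in\delta^+(y)}c(f)>\tfrac1k c(x,y)\}$; by Lemma~\ref{lemma:E'}, $c(E')\le\tfrac k2\,w(A)$. Among the remaining edges, put into $E''$ those with $c(e)\le w(A)/n_A$; since $|E(A)\setminus E'|\le n_A$, this gives $c(E'')\le w(A)$. The rest, $E''':=E(A)\setminus(E'\cup E'')$, consists of edges $e=(x,y)$ with $w(A)/n_A<c(e)\le w(A)$ (the upper bound again from Lemma~\ref{lemma:weight-bound}) that moreover satisfy $c(f)\le\tfrac1k c(e)$ for every $f\in\delta^+(y)$.

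For $E'''$ I would use geometrically spaced thresholds $r_i:=(w(A)/n_A)\cdot(k/4)^i$ for $i=0,1,\dots,m-1$, where $m:=\lceil\log_{k/4}n_A\rceil$, so that the intervals $(r_i,\tfrac k4 r_i]=(r_i,r_{i+1}]$ tile $(w(A)/n_A,\,w(A)]$. Every $e\in E'''$ has $c(e)$ in exactly one such interval and hence lies in the corresponding set $E_{r_i}$ of~(\ref{def:Er}), since both the $c(e)$-range condition and the $\delta^+$-condition hold. Thus $E'''\subseteq\bigcup_{i=0}^{m-1}E_{r_i}$, and Lemma~\ref{lemma:arborescence} gives $c(E''')\le\sum_i c(E_{r_i})\le 2m\cdot w(A)$. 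Putting the three pieces together and using $m\le\tfrac{\log n_A}{\log(k/4)}+1$,
\[ c(A)=c(E')+c(E'')+c(E''')\le\Bigl(\tfrac k2+1+2m\Bigr)w(A)\le\Bigl(\tfrac k2+3+\tfrac{2\log n_A}{\log(k/4)}\Bigr)w(A). \]

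It remains to optimize $k$. Taking $k:=4\cdot\tfrac{\log n_A}{\log\log n_A}$ (which is $>4$ because $n_A\ge 18$), one has $\log(k/4)=\log\log n_A-\log\log\log n_A=(1+o(1))\log\log n_A$, so $\tfrac k2=2\tfrac{\log n_A}{\log\log n_A}$ and $\tfrac{2\log n_A}{\log(k/4)}\le\tfrac{4\log n_A}{\log\log n_A}$ once $n_A$ passes a small threshold; a short computation using $n_A\ge18$ then shows $\tfrac k2+3+\tfrac{2\log n_A}{\log(k/4)}\le 12\,\tfrac{\log n_A}{\log\log n_A}$ (the $+3$ being absorbed via $\tfrac{\log n_A}{\log\log n_A}\ge\tfrac12$), which is the claim. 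The substantive content is the geometric bucketing of Step~3 — that after discarding the tiny edges the surviving $c$-values span a factor of only $n_A$, so $O(\log_{k/4}n_A)$ copies of Lemma~\ref{lemma:arborescence} suffice — together with the choice of $k$ as $\Theta(\log n_A/\log\log n_A)$; everything else is bookkeeping with the lemmas already proved. The only real pitfall is tracking the constants (and the logarithm base) carefully enough that the explicit bound with constant $12$ goes through.
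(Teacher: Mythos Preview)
Your argument is correct and assembles the same ingredients as the paper (Lemmas~\ref{lemma:weight-bound}, \ref{lemma:E'}, \ref{lemma:arborescence}, and geometric bucketing of the $c$-values with ratio $k/4$). The only organizational difference is the direction of the computation: the paper sets $k:=c(A)/w(A)$ and runs the estimate \emph{backward}, showing that after accounting for $E'$ and the $\lfloor k/6\rfloor$ buckets the residual set $E^*$ of tiny edges must satisfy $|E(A)|\ge|E^*|\ge(k/6)^{k/6}$, which inverts directly to the claimed bound without any explicit optimization. You instead choose $k=4\log n_A/\log\log n_A$ up front and bound $c(A)$ forward; this costs you the auxiliary observation $c(A)\le n_A\,w(A)$ (to force $n_A\ge18$) and the final calibration of constants, but is otherwise the same proof.
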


\begin{proof}
We define $k := c(A) / w(A)$. By assumption we have $k \ge 18$.
For $i= 1, 2, \ldots, \lfloor k/6 \rfloor$ we define $r_i := \left( \frac{4}{k}\right)^i\cdot w(A)$ and for these numbers we 
define sets $E_{r_i}$ as in~(\ref{def:Er}). By Lemma~\ref{lemma:weight-bound} we have 
$c(e) \le w(A) = \frac{k}{4} \cdot \left(\frac{4}{k}\right)^1\cdot w(A) = \frac{k}{4}\cdot r_1$ and therefore we have:
\[\bigcup_{i=1}^{\lfloor k/6 \rfloor} E_{r_i} = \left\{ e=(x,y)\in E(A): \left(\frac{4}{k}\right)^{\lfloor k/6 \rfloor} \cdot w(A) < c(e) \mbox{ and } c(f) \le \frac{1}{k}\cdot c(e) ~\forall f\in \delta^+(y) \right\} .\]
Define \[E' := \{ (x,y) \in E(A): \max_{f\in \delta^+(y)} c(f) > \frac{1}{k} \cdot c(x,y) \}\] and 
\[E^* := \{e\in E(A): c(e) \le \left(\frac{4}{k}\right)^{\lfloor k/6 \rfloor} \cdot w(A)\} .\]
Then we have \[E(A) = E' \cup E^* \cup \bigcup_{i=1}^{\lfloor k/6 \rfloor} E_{r_i}.\] 

Using Lemma~\ref{lemma:E'} and Lemma~\ref{lemma:arborescence} we get:
\begin{eqnarray*}
k\cdot w(A) ~ = ~ c(A) & \le & \sum_{i=1}^{\lfloor k/6 \rfloor} c(E_{r_i}) + c(E') + c(E^*) \\
                       & \le & {\lfloor k/6 \rfloor} \cdot 2\cdot w(A) + \frac{k}{2} \cdot w(A) + \left(\frac{4}{k}\right)^{\lfloor k/6 \rfloor} \cdot w(A) \cdot|E^*| \\
                       & \le &  \frac{5}{6} \cdot k \cdot w(A) + \left(\frac{4}{k}\right)^{\lfloor k/6 \rfloor} \cdot w(A) \cdot|E^*|
\end{eqnarray*}

This implies 
\begin{equation}
|E(A)| ~\ge~ |E^*| ~\ge~ \frac{k}{6} \cdot \left(\frac{k}{4}\right)^{\lfloor k/6 \rfloor} ~\ge~ \left(\frac{k}{6}\right)^{k/6} .
\label{eq:sizeE}
\end{equation}

The function $\frac{\log x}{\log \log x}$ is monotone increasing for $x > 18$. Therefore we get from inequality~(\ref{eq:sizeE}): 

\begin{eqnarray*}
2\cdot \frac{\log(|E(A)|)}{\log\log(|E(A)|)}\cdot w(A) 
& \ge & 2\cdot \frac{\log\left(\left(\frac{k}{6}\right)^{k/6}\right)}{\log\log\left(\left(\frac{k}{6}\right)^{k/6}\right)}\cdot w(A)\\
& = & 2\cdot \frac{\frac{k}{6} \cdot \log\left(\frac{k}{6}\right)}{\log\left(\frac{k}{6}\right) + \log\log\left(\frac{k}{6}\right)}\cdot w(A)\\ 
& \ge & \frac{k}{6} \cdot w(A)\\
& = & \frac{1}{6} \cdot c(A)
\end{eqnarray*}

\end{proof}

\section{Proof of Theorem~\ref{thm:main-crossingfree}}
\label{sec:proof}

Lemma~\ref{lemma:main-property-of-S1'} in combination with Lemma~\ref{lemma:mainlemma} shows that we can bound the length of
all edges in $S_1'$ by $O(\log n/\log \log n)$ times the length of an optimal tour $T$.
The statement of Lemma~\ref{lemma:main-property-of-S1'} also holds for the set $S''_1$: We can define an arborescence almost the same way as we did 
for the set $S'_1$ by taking the dual of the graph on $V$ formed by the edges of $T$ and $S''_1$ without the vertex for the outer region. The
only minor difference is the choice of the root vertex. For $S'_1$ we have chosen as root the vertex that corresponds to the region $R$
bounded by the edge $e_0=(x_0,y_0)$ and the edges in $E(T)\setminus T_{[x_0,y_0]}$. For the arborescence for $S''_1$ we choose as a root the region 
that contains $R$.  The proof of Lemma~\ref{lemma:main-property-of-S1'} then without any changes shows that the statement of
Lemma~\ref{lemma:main-property-of-S1'} also holds for the set $S''_1$. Similarly, by exchanging the role of the outer and the inner region of $T$,
Lemma~\ref{lemma:main-property-of-S1'} also holds for the sets $S'_2$ and $S''_2$. We are now able to prove our main result:\medskip

\noindent
\textit{Proof of Theorem~\ref{thm:main-crossingfree}:~}
Let $V\subseteq \mathbb{R}^2$ with $|V| = n$ be a non-degenerate Euclidean TSP instance, 
$T$ an optimal tour for $V$ and $S$ a 2-optimal tour for $V$ such that $T$ and $S$ are crossing-free.
We partition the tour $S$ into the five (possibly empty) sets $S'_1$, $S''_1$, $S'_2$, $S''_2$, and $S_3$ as defined in Section~\ref{sec:edge-partition}.
Then $c(S_3) \le c(T)$. We claim that $c(S'_1) = O(\log n / \log \log n) \cdot c(T)$.
If $c(S'_1) < 18 \cdot c(T)$ this is certainly the case. Otherwise by Lemma~\ref{lemma:main-property-of-S1'} and 
Lemma~\ref{lemma:mainlemma} we get 
\[c(S'_1) ~\le~ 12\cdot \frac{\log(n)}{\log \log (n)}\cdot c(T) \]
which again proves the claim. As observed above, Lemma~\ref{lemma:main-property-of-S1'} also holds for the sets $S''_1$, $S'_2$, and $S''_2$. 
Therefore, we can apply the same argument to the sets $S''_1$, $S'_2$, and $S''_2$ and get
\[c(S) ~=~ c(S'_1) + c(S''_1) + c(S'_2) + c(S''_2) + c(S_3) ~=~  O(\log n / \log \log n) \cdot c(T).\]
\qed

\bibliography{Euclidean2OPT} 

\end{document}